\newtheorem{theorem}{Theorem}
\newtheorem{corollary}{Corollary}
\newtheorem{proposition}{Proposition}
\theoremstyle{definition}
\newtheorem{definition}{Definition}
\theoremstyle{remark}
\newtheorem{remark}{Remark}
\newcommand{\co}{\colon\,}
\newcommand{\cA}{\mathcal A}
\newcommand{\cB}{\mathcal B}
\newcommand{\cU}{\mathcal U}
\newcommand{\bT}{\mathbb T}
\newcommand{\bR}{\mathbb R}
\newcommand{\bC}{\mathbb C}
\newcommand{\bH}{\mathbb H}
\newcommand{\bZ}{\mathbb Z}
\newcommand{\bP}{\mathbb P}
\newcommand{\cH}{\mathcal H}
\newcommand{\cK}{\mathcal K}
\newcommand{\pt}{\text{pt}}
\newcommand{\lp}{\textup{(}}
\newcommand{\rp}{\textup{)}}
\newcommand{\tKR}{\widetilde{KR}}
\begin{document}
\title[Orientifolds and twisted $KR$-theory]
{T-duality for orientifolds and twisted KR-theory}

\author{Charles Doran \and Stefan M\'{e}ndez-Diez}
\address{Department of Mathematical and Statistical Sciences\\
University of Alberta\\
Edmonton, AB T6G 2G1, Canada}
\email[Charles Doran]{doran@math.ualberta.ca}
\email[Stefan Mendez-Diez]{sdiez@math.ualberta.ca}
  
\author{Jonathan Rosenberg}
\address{Department of Mathematics\\
University of Maryland\\
College Park, MD 20742-4015, USA} 
\email[Jonathan Rosenberg]{jmr@math.umd.edu}
\thanks{Partially supported by NSF grant DMS-1206159.}
\keywords{orientifold, $O$-plane, $KR$-theory, T-duality, Chan-Paton
  bundle, brane charge} 
\subjclass[2010]{Primary 19L50; Secondary 19L47 81T30 19L64 19M05} 
\begin{abstract}
D-brane charges in orientifold string theories are classified by the
$KR$-theory of Atiyah. However, this is assuming that all $O$-planes
have the same sign. When there are $O$-planes of different signs,
physics demands a ``$KR$-theory with a sign choice'' which up until
now has not been studied by mathematicians (with the unique exception 
of Moutuou, who didn't have a specific application in mind). We give a
definition of 
this theory and compute it for orientifold theories compactified on
$S^1$ and $T^2$. We also explain how and why additional ``twisting'' is
implemented. We show that our results satisfy all possible T-duality
relationships for orientifold string theories on elliptic curves,
which will be studied further in subsequent work. 
\end{abstract}

\maketitle

\section{Introduction}
\label{sec:intro}

The purpose of this paper is to describe the versions of $K$-theory
needed to describe $T$-duality for orientifolds, and to compute and
analyze them in a few simple but important cases. By \emph{orientifolds}
we mean spacetimes of the form $\bR^k\times X$, where $X$ is a smooth
$10-k$ dimensional (oriented) manifold equipped with an involution, $\iota$,
which defines the orientifold structure.\footnote{Note that in some of
  the literature, the word ``orientifold'' is used to denote the
  quotient space $X/\iota$, but 
  it is really essential to keep track of the \emph{pair} $(X,\iota)$
  and not just the quotient.} 
\emph{Orientifold string theories} are defined by sigma-models with target
space an orientifold $(X, \iota)$, where the fundamental strings are
\emph{equivariant} maps $\varphi\co \Sigma\to X$, so
that $\iota\circ\varphi = \varphi\circ\Omega$. Here $\Sigma$ is an
oriented $2$-manifold, possibly with boundary (the case of open strings),
called the \emph{string worldsheet}, and $\Omega$, called the
\emph{worldsheet parity operator}, is an orientation-reversing
involution on $\Sigma$. We require $\Sigma/\Omega$, though not
necessarily $\Sigma$ itself, to be connected.  (Thus an allowable
possibility is $\Sigma = \Sigma_0 \amalg \overline{\Sigma}_0$, where
$\Sigma_0$ is a connected oriented surface, $\overline{\Sigma}_0$ is the same
surface with orientation reversed, and $\Omega$ interchanges the two.)
See for example \cite{Distler:2009ri}; there some extra twisting data,
which we are ignoring for the moment, is also taken into account, and
the notation is slightly different.

As described in \cite{Witten:1998,Minasian:1997}, $D$-branes in string theories
are classified by $K$-theory, where the relevant type of $K$-theory
depends on the string theory being considered. Since the physics of
$T$-dual theories is indistinguishable, the groups classifying stable
$D$-branes in two $T$-dual theories must be isomorphic. This led
Bouwknegt, Evslin, and Mathai, and later
Bunke and Schick, to describe the $T$-duality between the type IIB
theory on a spacetime $X$ that is a circle bundle over base $Z$, with
$H$-flux $H$, and the type IIA theory on a dual circle bundle $\widetilde{X}$
over $Z$, with dual $H$-flux $\widetilde{H}$, as an isomorphism of twisted
$K$-theories: 
\begin{equation}
\label{eqn:BSTdual}
K^*(X,H)\cong K^{*+1}(\widetilde{X},\widetilde{H}).
\end{equation}
In the above equation
\begin{equation}
c_1(X)=\widetilde{\pi}_*(\widetilde{H}) \text{  and  } 
c_1(\widetilde{X})=\pi_*(H),
\end{equation}
where $c_1(X)\in H^2(X;\bZ)$ is the first Chern class and
$\pi_*\co H^k(X)\to H^{k-1}(Z)$ is the Gysin push-forward map which
in terms of de Rham cohomology is defined by integration along
the fiber \cite{MR2080959,Bunke:2005}. This was later generalized to
the case where $X$ is a $T^n$-bundle in \cite{Mathai:2005} and
\cite{Bunke:2006}.   

For orientifolds, $D$-brane charges are classified by $KR$-theory
\cite[\S5.2]{Witten:1998}, \cite{Hori:1999me}, \cite{Gukov:1999}, which we will review in section
\ref{sec:reviewKR}. One benefit of using $KR$-theory is that it can be
viewed as a sort of \textit{universal $K$-theory}. It is ``universal''
in the sense that the $K$-theories
$KU$ for the type II theories, $KO$ for the type I
theory, and $KSC$ for the type $\widetilde{\text{I}}$ theory can
all be built out of $KR$. This shows that by keeping track of the
appropriate involution $\iota$, one does not need to make a choice of
which type of $K$-theory to use, and it is already accounted for just
by using $KR$-theory. However, $KR$-theory has some immediate limitations
that prevent us from generalizing a topological description of
$T$-duality like equation \eqref{eqn:BSTdual} to orientifolds.  

The first problem is that it is not immediately clear how to twist
$KR$-theory, or even what is meant by $H$-flux. Traditionally, $H\in
H^3(X;\bZ)$, so there is no reason to expect $H$ to be
equivariant. Another related issue is that orientifold theories
involve extra information, which is not just topological, relevant to
the stable $O$-plane charges. This issue is already apparent
when studying circle 
orientifolds, even though the dimension of a circle is too low to have
to worry about more general twistings.  

In section \ref{sec:orient} we will review $T$-duality between all
possible circle orientifolds and the classification of stable
$D$-branes in the different theories. The $T$-dual of the type I
theory on a circle (which is a type IIB orientifold on the circle with
trivial involution) is a type IIA orientifold on a circle with
involution given by reflection (referred to as the type IA or type I$'$
theory). The $T$-dual to the type IIB theory on the circle with the
antipodal map (sometimes referred to as the type $\widetilde{I}$
theory) is also 
a type IIA orientifold on the circle with involution given by 
reflection (often called the type $\widetilde{IA}$ theory). The
compactification manifolds for both the type IA and $\widetilde{IA}$
theories are topologically equivalent, with the difference being the
charges of the $O$-planes at the two fixed points. There are
physical descriptions of the classification of $D$-branes in the two
theories \cite{Olsen:1999,Bergman:1999}; however, we are not aware of
any mathematical description for the classification of $D$-branes in
the type $\widetilde{IA}$ theory via $KR$-theory. In fact, a
topological invariant such as $KR$-theory cannot pick up the
difference between the type IA and $\widetilde{IA}$ compactifications
since the distinction is non-topological. In section
\ref{sec:chargedKR} we propose a variant of $KR$-theory, which we
call $KR$-theory with a sign choice, that can distinguish between the
two cases, giving a mathematical description of the brane charges in
the type $\widetilde{IA}$ theory. We then give all possible sign
choices for $KR$-theories for orientifolds of $2$-tori.

A word about our sign convention: we say that an $O$-plane has
positive sign, or is an $O^+$-plane, if the Chan-Paton bundle on it
has orthogonal type, and has negative sign, or is an $O^-$-plane, if
the Chan-Paton bundle on it has symplectic type. The sign decorations
that we attach to $KR$-theory follow the same convention. Since a
tensor product of an orthogonal bundle with a symplectic bundle is
symplectic, while the tensor product of two symplectic bundles is
orthogonal, signs multiply as one would expect. This convention is the
same as the one made by Witten in \cite{Witten:1998-02}, but is the
\emph{reverse} of the convention made by Gao and Hori in
\cite{Gao:2010ava}. Both sign conventions are in general use, but we
feel that the multiplication rule indicates that this one is
preferable, even though it means (as Witten points out) that the
tadpoles are of opposite sign.

When we move up in dimension to $2$-tori, the sign choice is no longer
enough to account for all possible orientifold theories. In particular,
$KR$-theory with a sign choice cannot describe the type I theory
without vector structure \cite{Witten:1998-02}. For this we need to
include more general twists of $KR$-theory, which will be discussed in
section \ref{sec:generaltwists}. Here we use physics to
motivate which $KR$-theories should be isomorphic, and 
check the results via topology.
The twist applied to
$KR$-theory is related to the geometry of its $T$-dual theory and is
described in \cite{DMDR}. The purpose of the current paper is to
describe the relevant twisted $KR$-theories needed to give the
geometric interpretation in \cite{DMDR}. 

One of our motivations for a detailed analysis of $T$-duality via
orientifold plane charges in $KR$-theory was the special case of $c =
3$ Gepner models as studied in \cite{Bates:2006}.  The authors of that
paper used simple current techniques in CFT to construct the charges
and tensions of Calabi-Yau orientifold planes, though a $K$-theoretic
interpretation was missing.  Although the interpretation of brane
charges in $KR$-theory is sensitive to regions of stability, this
$K$-theoretic interpretation does not depend on the specific structure
of $c = 3$ Gepner models, nor even on a rational conformal field
theoretic description.  These results should be contrasted with the
recent work \cite{2010arXiv1012.1634E} where a twisted equivariant
$K$-theory description of the $D$-brane charge content for WZW models
is provided (see also \cite{MR2262682} for examples which make
explicit the isomorphism with topological $K$-theory in the case of
some Gepner models).  Work in progress seeks to establish an
isomorphism between a suitable (real) variant of twisted equivariant
$K$-theory, sufficient to capture orientifold charge content, and our
$KR$-theory with sign choices for Gepner models.  As a side-effect,
such an isomorphism will then permit computation of $KR$-theory for
complicated Calabi-Yau manifolds through a simpler computation at the
Gepner point. 

After the first version of this paper was completed, we became aware
of the work of Moutuou \cite{MoutuouThesis,2011arXiv1110.6836M,MR3158706}
on groupoid twisted $K$-theory, which includes
our $KR$-theory with a sign choice as a special case.  Indeed,
Moutuou's classification of possible twists of $KR$ coincides with ours,
though his point of view and motivation were quite different.

We would like to thank Max Karoubi for many useful discussions
regarding the contents of
Section \ref{sec:chargedKR}, and in particular for suggesting the
formulation of Theorem \ref{thm:KRspecies1}, as well as a method of
proof for that theorem. We also thank the referee for several useful
suggestions. 

\section{Review of classical $KR$-theory}
\label{sec:reviewKR}

Let $X$ be a locally compact space (in most physical situations it
will be a smooth manifold) with involution $\iota$. A Real vector
bundle on $X$ (in the sense of Atiyah \cite{MR0206940}) 
is a complex vector bundle $p\co E\to X$ together with
a conjugate-linear vector bundle isomorphism $\varphi\co E\to E$ such
that $\varphi^2=1$ and $\varphi$ is compatible with $\iota$, in the
sense that $p\circ \varphi = \iota\circ p$. $KR(X)$ is the group of
pairs of Real vector bundles $(E,F)$ on $X$
(with compact support) modulo the equivalence relation 
\begin{equation}
\label{eqn:equivrel}
(E,F)\sim (E\oplus H, F\oplus H),
\end{equation}
for any  Real vector bundle $H$. Note that
$KR(X)$ depends on the involution $\iota$ even though it isn't
explicitly stated. The compact support condition means that we can
choose $E$ and $F$ to be trivialized off a sufficiently large
$\iota$-invariant compact set, with $\varphi$ off this compact set
being standard complex conjugation on a trivial bundle. 

To define the higher $KR$-groups, $KR^{-j}(X)$, we must first
introduce some notation. Let $\bR^{p,q}=\bR^p+i\bR^q$, where the
involution is given by complex 
conjugation, and $S^{p,q}$ be the $p+q-1$ sphere in
$\bR^{p,q}$. \emph{Caution}: In
this notation, the roles of $p$ and $q$ are the reverse of those in the
notation used by Atiyah in \cite{MR0206940} but the same as the
notation in \cite{MR1031992}, \cite{Hori:1999me}, \cite{Bergman:1999} and
\cite{Olsen:1999}. Then we can define 
$$KR^{p,q}(X)=KR(X\times\bR^{p,q}).$$
This obeys the periodicity condition
$$KR^{p,q}(X)\cong KR^{p+1,q+1}(X),$$
so $KR^{p,q}$ only depends on the difference $p-q$ and we can define
$$KR^{q-p}(X)=KR^{p,q}(X).$$
$KR^{-j}(X)$ is periodic with period $8$.

When $\iota$ is the trivial involution, the Reality condition is
equivalent to $E$ being the complexification of a 
real bundle. Thus $KR$ gives a classification of real
vector bundles and we find 
\begin{equation}
\label{eqn:KRKO}
KR^{-j}(X)\cong KO^{-j}(X),
\end{equation}
when $\iota$ is trivial \cite[p.\ 371]{MR0206940}. Complex $K$-theory can also
be obtained from $KR$-theory using 
\begin{equation}
KR^{-j}(X\times S^{0,1}) = KR^{-j}(X\amalg X)\cong K^{-j}(X),
\label{eqn:KRK}
\end{equation}
where the involution exchanges the $2$ copies of $X$ \cite[Proposition
  3.3]{MR0206940}. And as shown by Atiyah \cite[Proposition
  3.5]{MR0206940},
$KR^{-j}(X\times S^{0,2})\cong KSC^{-j}(X)$, the self-conjugate $K$-theory
of Anderson \cite{Anderson} and Green
\cite{MR0164347}, which is periodic with period $4$.

In fact, when the involution $\iota$ has no fixed points, there is a
spectral sequence (\eqref{eqn:Bredon} below), whose $E_2$-term is
$4$-periodic, converging to $KR^{-j}(X)$. This motivated Karoubi and Weibel
\cite[Proposition 1.8]{Karoubi:2005} to assert that $KR^{-j}(X)$ is
always $4$-periodic when the involution is free, but in general this
is not the case (unless one inverts the prime $2$). The groups
$KR^{-j}(S^{0,4})$ provide a counterexample.

When $X$ is compact and $X^\iota$ is non-empty, the inclusion of an
$\iota$-fixed basepoint into $X$ is equivariantly split, so
the reduced $KR$-groups, $\widetilde{KR}^{-j}(X)$, are defined such that 
$$KR^{-j}(X)\cong\widetilde{KR}^{-j}(X)\oplus KR^{-j}(\pt).$$
We will write simply $KR^{-j}$ or $KO^{-j}$ for $KR^{-j}(\pt)$. When
$Y\subseteq X$ is closed and $\iota$-invariant, we can define the
relative $KR$-theory as 
$$KR^{-j}(X,Y)\cong\widetilde{KR}^{-j}(X/Y).$$
As we will discuss in the following section, this is the relevant
group for classifying $D$-brane charges. 

\section{Orientifolds on a circle and $T$-duality}
\label{sec:orient}

In this section we will consider orientifold string theories 
with target space $(S^1, \iota)$, where we view $S^1$ as the unit
circle in $\bR^2$ and where the involution $\iota$ comes from a linear
involution on  $\bR^2$. Since linear involutions are classified by the
dimension of the $(-1)$-eigenspace, there are (up to isomorphism)
exactly three possibilities for $(S^1, \iota)$: the trivial involution
corresponding to $S^{2,0}$, reflection corresponding to $S^{1,1}$, and
the antipodal map corresponding to $S^{0,2}$. $S^{2,0}$ and $S^{0,2}$
only support the type IIB theory since the involution is orientation
preserving, while $S^{1,1}$ only supports the type IIA theory since
the involution is orientation reversing.  

The type IIB theory on $S^{2,0}$ is the type I theory compactified on
a circle. It is known to be $T$-dual to the type IIA theory on
$S^{1,1}$, sometimes referred to as the type IA or I$'$ theory
\cite{Olsen:1999,Bergman:1999,Hori:1999me}. The type IIB theory on $S^{0,2}$ is
often called the type $\widetilde{\text{I}}$ theory and is $T$-dual to the type
$\widetilde{IA}$ theory \cite[\S6.2]{Witten:1998-02},
\cite[\S7.1]{Gao:2010ava}.   

In this section we will review these $T$-duality relations and their
$K$-theoretic descriptions. The lack of a mathematical description for
the $K$-theory description of the type $\widetilde{IA}$ theory will motivate the
definition for a variant of $KR$-theory given in Section
\ref{sec:chargedKR}.  Before describing the various $T$-dualities we
will review how $D$-branes are classified by $K$-theory. 

\subsection{Classification of $D$-branes by $KR$-theory}

$D$-branes on orientifolds $(X,\iota)$,
where $X$ is a smooth manifold and $\iota$ is an involution on $X$, 
are classified by pairs of vector bundles on $X$ (the Chan-Paton
bundles), each with conjugate-linear involutions compatible with $\iota$, 
modulo creation and annihilation of charge zero
$D$-brane systems (as in equation \eqref{eqn:equivrel}). So
$D$-branes in orientifolds are classified by $KR$-theory
\cite[\S5.2]{Witten:1998}. 

More generally, when we compactify string theory on an $m$-dimensional
space $M$, so that the spacetime manifold is $\bR^{10-m,0}\times M$,
we are interested in the charges of $D$-branes in the non-compact
dimensions. So we want to consider $D$-branes of codimension $9-m-p$
in $\bR^{9-m,0}$. These can arise from both $Dp$-branes located at a
particular point in $M$ or higher dimensional $D$-branes that wrap
non-trivial cycles in $M$. Furthermore, we only want to consider
branes with finite energy, so we want to classify bundle pairs that are
asymptotically equivalent to the vacuum in the transverse space
$\bR^{9-m-p,0}$. Mathematically this means
we want to add a copy of $M$ at infinity, i.e., take (the one-point
compactification of $\bR^{9-m-p,0}$)$\times M$, and consider bundles on
$S^{10-m-p,0}\times M$ that are trivialized on the copy of $M$ at
infinity. Such bundles are classified by $KR^{-i}(S^{10-m-p,0}\times
M, M)$; the index $i$ depends on the string theory and involution being
considered. For purposes of calculation it is useful to relate this
to the $KR$-theory of $M$.  
\begin{proposition}
\label{Thm:relvabs}
$$KR^{-i}(S^{10-m-p,0}\times M, M)\cong KR^{p+m-9-i}(M).$$
\end{proposition}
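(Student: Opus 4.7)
The plan is to express the relative $KR$-group as a suspended absolute $KR$-group of $M$ by identifying the quotient $(S^{10-m-p,0}\times M)/M$ with the one-point compactification of $\bR^{9-m-p,0}\times M$, after which the defining identity $KR^{-j}(X)=KR(X\times \bR^{j,0})$ from Section~\ref{sec:reviewKR} produces the stated shift in degree.

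Concretely, I would first pick a basepoint $*\in S^{10-m-p,0}$; because the involution on $S^{10-m-p,0}$ is trivial, $*$ is automatically $\iota$-fixed and $\{*\}\times M$ is $\iota$-invariant, so the relative $KR$-theory is defined. Applying the identification $KR^{-i}(X,Y)\cong \widetilde{KR}^{-i}(X/Y)$ from the end of Section~\ref{sec:reviewKR} gives
\begin{equation*}
KR^{-i}(S^{10-m-p,0}\times M,\, M)\;\cong\;\widetilde{KR}^{-i}\bigl((S^{10-m-p,0}\times M)/(\{*\}\times M)\bigr).
\end{equation*}
Removing $*$ produces an equivariant homeomorphism $S^{10-m-p,0}\setminus\{*\}\cong \bR^{9-m-p,0}$ (with trivial involution), and since one-point compactification converts products into smash products, the quotient above is equivariantly homeomorphic to $(\bR^{9-m-p,0}\times M)^+$. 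Using the compactly-supported description $\widetilde{KR}^{-i}(Y^+)=KR^{-i}(Y)$, this yields
\begin{equation*}
KR^{-i}(S^{10-m-p,0}\times M,\,M)\;\cong\;KR^{-i}(\bR^{9-m-p,0}\times M).
\end{equation*}
The last step is to apply the definition: $KR^{-i}(X)=KR(X\times \bR^{i,0})$, together with $\bR^{9-m-p,0}\times\bR^{i,0}=\bR^{9-m-p+i,0}$ and the relabeling $KR^{p,q}=KR^{q-p}$, identifies the right-hand side with $KR^{9-m-p+i,0}(M)=KR^{p+m-9-i}(M)$.

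I do not expect a serious obstacle, as each step is either a definition or a standard reduced/relative $K$-theory manipulation recalled in Section~\ref{sec:reviewKR}. The only point meriting care is equivariance of the homeomorphisms used; however, because $\iota$ acts trivially on the $S^{10-m-p,0}$-factor (the superscript $0$) and only nontrivially on $M$, every identification above is automatically equivariant, and the suspension argument reduces to the absolute $KR$-theory of $M$ without needing to track any nontrivial action on the Euclidean factor.
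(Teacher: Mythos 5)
Your argument is correct and is essentially the paper's own proof: the paper simply cites excision to pass from $KR^{-i}(S^{10-m-p,0}\times M,\,M)$ to the compactly supported group $KR^{-i}(\bR^{9-m-p,0}\times M)$, which is exactly the content of your quotient/one-point-compactification step, and then applies the same definitional suspension shift $KR(X\times\bR^{a,0})=KR^{-a}(X)$. The only point worth making explicit is that identifying $(S^{10-m-p,0}\times M)/(\{\ast\}\times M)$ with $(\bR^{9-m-p,0}\times M)^{+}$ uses compactness of $M$, which holds here since $M$ is the compactification manifold.
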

\begin{proof}
Note that by excision,
\begin{align*}
KR^{-i}(S^{10-m-p,0}\times M,M) &\cong
KR^{-i}\left(\left(S^{10-m-p,0}\smallsetminus \{\pt\}\right) \times
  M\right)\\
&\cong KR^{-i}\left(\bR^{9-m-p,0} \times M\right)\\
&\cong KR^{p+m-9-i}(M). \qedhere
\end{align*}
\end{proof}

Thus $Dp$-branes are classified by $KR^{p+m-9-i}(M)$. It is important to
keep track of the index. This point is often overlooked when studying
$D$-branes in the (non-orientifold) type II theories, which are classified by
$KU$-theory, since $KU$-theory has period $2$ and only the parity of
the index matters.

In what follows we will also have to study the charges of the
\emph{$O$-planes}, the components of the fixed set of the involution
on spacetime.\footnote{This terminology is unfortunate but standard;
  $O$-planes in general orientifold theories
  do not have to be planes. They can have more complicated
  topology.} The restriction of a Chan-Paton bundle to an $O$-plane
must have either a real (positive) or symplectic (negative) structure.
The classification of $D$-branes via $KR$-theory is only valid when all
$O$-planes have positive charge, and breaks down when different
$O$-planes have different charges. It is this
breakdown that leads us to define $KR$-theory with a sign choice in section
\ref{sec:chargedKR}. 

\subsection{The Type I Theory and Its $T$-dual}
\label{sec:typeITdual}

The type I theory compactified on a circle is formally identical
to the type IIB orientifold theory compactified on $S^{2,0}$. Consider the
bosonic fields in the type IIB theory compactified on $S^{2,0}$, 
$$X=X_L+X_R.$$
The worldsheet parity operator reverses the orientation of the string,
and so exchanges left-movers and right-movers. This leaves the bosonic
fields invariant under $\Omega$ and therefore compatible with the
trivial involution. 

$T$-duality leaves the left-moving fields
invariant, while reversing the sign of the right-moving fields, so the
$T$-dual coordinates are 
$$\tilde X=X_L-X_R.$$
Under the action of $\Omega$, the $T$-dual coordinates transform as
$$\tilde X\mapsto -\tilde X.$$
This shows that the $T$-dual to the type I theory compactified on a
circle must be the type IIA theory (since $T$-duality exchanges types
IIA and IIB) mod the action of $\Omega$ combined with the spacetime
involution that reflects the compact dimension. This is the type IIA
theory compactified on $S^{1,1}$. In the literature it is often
referred to as the type IA (or I$'$) theory. One could also show these
two theories are $T$-dual to one another by showing there is no
momentum, but winding in the $S^{2,0}$ direction, while $S^{1,1}$ has
momentum, but no winding. 

The type I theory on $S^1$ has a space filling $O9^+$-plane wrapping
the compact dimension. The $T$-dual type IA theory has
$2$ $O8^+$-planes located at the $2$ fixed points of $S^{1,1}$. Recall that we
use the plus sign to denote that the $O$-planes have negative
$D$-brane charge and require the addition of $D$-branes to obtain a
zero charge system.  

$Dp$-brane charges in the type I theory compactified on a circle are
classified by 
\begin{align}
\label{eqn:KRbranes_I}
KR(S^{9-p,0}\times S^{2,0}, S^{2,0}) &\cong KO^{p-8}(S^1)\nonumber\\
&\cong KO^{p-8}\oplus KO^{p-9}.
\end{align}
The second factor in the last line of equation \ref{eqn:KRbranes_I}
corresponds to $Dp$-brane charge coming from unwrapped branes and the
first factor corresponds to the charge contribution from branes
wrapping $S^1$. The complete brane content is given in Table 
\ref{Table:species2_S1}. 

Since the type IA theory is obtained from the type I theory
compactified on a circle by a $T$-duality, the relevant $KR$-theory is
shifted in index by $1$. Therefore, $Dp$-brane charges in the type IA
theory are classified by 
\begin{align}
KR^{-1}(S^{9-p,0}\times S^{1,1}, S^{1,1}) &\cong KR^{p-9}(S^{1,1})\nonumber\\
&\cong KO^{p-9}\oplus KO^{p-8},
\end{align}
where the second factor on the right-hand side corresponds to
$Dp$-brane charge coming from unwrapped branes and the first factor
corresponds to the charge contribution from wrapped branes. The
complete brane content is given in Table \ref{Table:species2_S1}. The
fact that $T$-duality exchanges wrapped and unwrapped branes is
described by the exchanged roles for $KO^{p-8}$ and $KO^{p-9}$ in the two
theories. 

\begin{table}[ht]
\noindent\makebox[\textwidth]{\begin{tabular}{|| c || c | c | c | c | c | c | c | c | c | c || m{2.25cm} | m{2.25cm} ||}
\hline
$Dp$-brane & $D8$ & $D7$ & $D6$ & $D5$ & $D4$ & $D3$ & $D2$ & $D1$ & $D0$ & $D(-1)$ & type I on $S^1$ & type IIA on $S^{1,1}$ \\
\hline\hline
$KO^{p-8}$ & $\bZ$ & $\bZ_2$ & $\bZ_2$ & $0$ & $\bZ$ & $0$ & $0$ & $0$ & $\bZ$ & $\bZ_2$ & $(p+1)$-brane wrapping $S^{2,0}$  & unwrapped $p$-brane\\ \hline
$KO^{p-9}$ & $\bZ_2$ & $\bZ_2$ & $0$ & $\bZ$ & $0$ & $0$ & $0$ & $\bZ$ & $\bZ_2$ & $\bZ_2$ & unwrapped $p$-brane & $(p+1)$-brane wrapping $S^{1,1}$  \\ \hline
\end{tabular}}
\smallskip
\caption{$D$-brane charges in the type I theory compactified on a circle and the type IA theory.}
\label{Table:species2_S1}
\end{table}

The non-BPS torsion charged branes are not stable at all points of the moduli
space. $D0$-brane charge in the type I theory receives an integral
contribution from a wrapped BPS $D1$-brane and a $\bZ_2$ contribution
from an unwrapped non-BPS $D0$-brane. 
$K$-theory accurately predicts the entire brane charge spectrum 
everywhere, in and out of the region of stability for the
non-BPS branes, but the sources of the charges may vary at different
points of the moduli space.  For more details, see \cite{Bergman:1999}.

\subsection{The Type $\widetilde{\text{I}}$ and $\widetilde{IA}$ Theories}

The type $\widetilde{\text{I}}$ theory is the type IIB orientifold
$(\bR^9\times S^1, \iota)$ where $\iota$ is the spacetime
involution that rotates $S^1$ by $\pi$ radians. In our
notation, this is the type IIB theory on $\bR^{9,0}\times S^{0,2}$. The
$T$-dual of the type $\widetilde{\text{I}}$ theory is the type $\widetilde{IA}$
theory \cite{Bergman:1999}. As we saw in the last section, the type IA
theory contains $2$ 
$O8^+$-planes. The type $\widetilde{IA}$ theory is obtained from the
type IA theory by replacing one of the $O8^+$-planes with an
$O8^-$-plane. Here an $O^-$-plane is an $O$-plane with symplectic
Chan-Paton bundle and positive
$D$-brane charge. (Note that if there were $O8^-$-planes at both fixed
points, then a charge $0$ system would require the addition of
anti-branes and wouldn't be supersymmetric.) 
We will refer to the compactification circle as
$S_{(+,-)}^{1,1}$. It is topologically equivalent to a compactification on
$S^{1,1}$, in that there are $2$ fixed points. However, the net
$O$-plane charge is zero.

$Dp$-brane charges in the type $\widetilde{\text{I}}$ theory are classified by 
\begin{equation}
KR(S^{9-p,0}\times S^{0,2},S^{0,2})\cong KSC^{p-8}.
\end{equation}
$KSC$ doesn't split into pieces from wrapped and unwrapped
branes as in the 
previous case. The authors of \cite{Bergman:1999} were still able
to determine which 
charges come from wrapped and unwrapped branes using what we know
about $T$-duality, the type IA theory and $O8^\pm$-planes. 

Since the type $\widetilde{IA}$ theory is $T$-dual to the type
$\widetilde{\text{I}}$ 
theory, $Dp$-brane charges in the type $\widetilde{IA}$ theory must
also be classified by $KSC^{p-8}$. It is important to note that there
is no mathematical description for this that we are aware of. There is
only the physical reasoning, which requires the assumption of
$T$-duality. Since the underlying topological space
for the type $\widetilde{IA}$ theory is $S^{1,1}$, we should be able
to classify $D$-brane charges by some \textit{twisted} $KR$-theory of
$S^{1,1}$. This idea motivates the definition given in the following
section.

\section{$KR$-theory with a Sign Choice}
\label{sec:chargedKR}

The compactification manifolds for the type IA and $\widetilde{IA}$
theories are topologically equivalent, even taking the involution
$\iota$ into account. Therefore, $KR$-theory cannot
differentiate between them. These two physical theories are
differentiated by the signs of the $O$-planes located at their fixed
sets, so we must enhance $KR$-theory with this information. 

Along with the space $X$ and the action of a group $G$ (in our case
$\bZ_2$), we must also include a sign choice, $\alpha$, on the
components of the fixed set. Physically this sign choice determines
the type of $O$-plane at the different components of the fixed set. In
other words, it is a choice of orthogonal or symplectic Chan-Paton
bundles on the different components. Recall our convention that a $+$
choice corresponds to an orthogonal Chan-Paton bundle, and a $-$
choice to a symplectic one. Note that the fixed sets for the
type IA and $\widetilde{IA}$ theories both have $2$ components, each a
point. The 
type IA theory is the sign choice $\alpha=(+,+)$, while the type
$\widetilde{IA}$ theory is the sign choice $\alpha=(+,-)$. We define
an extension of $KR$-theory that contains this information and that
fits into an exact sequence as in Theorem \ref{thm:Krsignles} below.

Intuitively, $KR_\alpha$ theory is defined in terms of a
generalization of Real vector bundles, namely pairs $(E,\Phi)$, where
$E$ is a complex vector bundle
over a real space $(X,\iota)$, and $\Phi\co E\to E$ is a conjugate-linear
vector bundle automorphism, equivariant with respect to $\iota$, 
and with $\Phi^2$ given by multiplication by 
$+1$ on components of the fixed set with a $+$ sign,
$-1$ on components of the fixed set with a $-$ sign.  Of course, if
all components of the fixed set have a $+$ sign and $\Phi^2\equiv 1$, then
this is just Atiyah's definition of a Real vector bundle. If all
components of the fixed set have a $-$ sign and $\Phi^2\equiv -1$, then 
this is the corresponding notion in the symplectic case (used to
define the theory often called $KH$ --- this is the name introduced
in \cite{Gukov:1999}, but the theory already appeared much earlier
in \cite{MR0254839}).  But for sign choices with both
signs present, it is not clear how changing $\Phi^2$ changes the
notion of Real$_\alpha$ vector bundle, or how
to get from this rough definition to a theory satisfying
Bott periodicity.  So for all these reasons
(as in \cite{MR1827946} and the literature on
twisted $K$-theory, for example), a rigorous definition of $KR_\alpha$
requires noncommutative geometry.

Therefore what we really do is
to define $KR_\alpha(X)$ to be the topological $K$-theory of a certain
noncommutative Banach algebra $\cA_\alpha(X)$. 
In what follows, $\cK$ and $\cK_{\bR}$
denote the algebras of compact operators on an infinite-dimensional separable
complex Hilbert space and an infinite-dimensional separable real
Hilbert space, respectively. Before we get to the rigorous definition
of $KR_\alpha(X)$, we first show that the requisite Banach algebras
exist, and study their topological $K$-theory.

\begin{theorem}
\label{thm:KRsignexists}
Let $(X,\iota)$ be a Real locally compact space with an assignment
of signs $\alpha$ to the components of the fixed set. Then an algebra
$A_\alpha(X)$ exists satisfying the following properties:
\begin{enumerate}
\item $A_\alpha(X)$ is a {\bfseries real} continuous-trace $C^*$-algebra
whose complexification \linebreak
$A_\alpha(X)\otimes_{\bR} \bC$ has spectrum $X$ and trivial
Dixmier-Douady invariant, and for which the induced action $\sigma$ of
$\text{Gal}\, (\bC/\bR)$ on $X$ is the given involution on $X$.  
\item The quotient of $A_\alpha(X)$ associated to any
component $Y^+$ of 
$X^G$ with positive sign choice is Morita equivalent {\lp}over
$Y^+${\rp} to $C_0^\bR(Y^+)$.
\item The quotient of $A_\alpha(X)$ 
associated to any component $Y^-$ of 
$X^G$ with negative sign choice is Morita equivalent {\lp}over
$Y^-${\rp} to $C_0^\bH(Y^-)$, where $\bH$ denotes the quaternions.
\end{enumerate}
Furthermore, there is a {\bfseries canonical} choice $\cA_\alpha(X)$
of such an algebra $A_\alpha(X)$.
\end{theorem}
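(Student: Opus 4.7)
My plan is to construct $\cA_\alpha(X)$ by gluing local models on an $\iota$-invariant open cover of $X$. Fix once and for all a separable infinite-dimensional complex Hilbert space $\cH$ equipped with both a real structure $J_+$ (antilinear, $J_+^2 = I$) and a quaternionic structure $J_-$ (antilinear, $J_-^2 = -I$); both exist on any infinite-dimensional $\cH$. They induce antilinear $*$-automorphisms $\sigma_\pm(T) = J_\pm T J_\pm^{-1}$ of $\cK = \cK(\cH)$ whose real fixed-point algebras are, respectively, the compact operators on a real and on a quaternionic Hilbert space, Morita equivalent over a point to $\bR$ and to $\bH$. Cover $X$ by $\iota$-invariant opens: the free part $X_0 = X\setminus X^\iota$ and, for each component $Y$ of $X^\iota$, an equivariant tubular neighborhood $U_Y$ (which exists in the locally compact setting by standard partition-of-unity arguments for proper $\bZ_2$-actions). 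On $U_Y$ of sign $s \in \{+,-\}$, take the local real $C^*$-algebra
\[
A_Y^{\,s} = \{f \in C_0(U_Y,\cK) : f(x) = J_s\, f(\iota(x))\, J_s^{-1}\},
\]
and on $X_0$ take the usual Atiyah-type real algebra $A_0$ defined by the same formula with $J_+$.

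All pairwise intersections of these cover members lie in $X_0$, where the candidate real structures on $C_0(\cdot,\cK)$ differ by continuously varying inner automorphisms. Because $U(\cH)$ is contractible by Kuiper's theorem, the resulting gluing cocycles can be trivialized up to homotopy, allowing the local algebras to be assembled into a single real continuous-trace $C^*$-algebra $A_\alpha(X)$. The required properties then follow directly: its complexification is $C_0(X,\cK)$, which is continuous-trace with spectrum $X$ and vanishing Dixmier-Douady class, and the induced $\text{Gal}(\bC/\bR)$-action on the spectrum is $\iota$ by construction; the quotient at a component $Y^+$ is the $\sigma_+$-fixed subalgebra of $C_0(Y^+,\cK)$, which is Morita equivalent over $Y^+$ to $C_0^\bR(Y^+)$; and the quotient at a component $Y^-$ is analogously Morita equivalent over $Y^-$ to $C_0^\bH(Y^-)$.

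For the canonical choice $\cA_\alpha(X)$, one fixes the triple $(\cH, J_+, J_-)$ together with a canonical system of tubular neighborhoods, for instance those arising from a fixed equivariant proper map $X \to [0,\infty)$ vanishing exactly on $X^\iota$. The main obstacle is showing that $\cA_\alpha(X)$ depends only on $(X,\iota,\alpha)$ up to canonical isomorphism; this reduces to the vanishing of a second \v{C}ech cohomology class on the orbit space $X/\bZ_2$ with coefficients in the sheaf of continuous $U(\cH)$-valued functions, which is again handled by Kuiper's theorem. Once canonicality is established, the remaining verifications of continuous trace, spectrum identification, and Morita equivalences at the fixed components are straightforward local computations.
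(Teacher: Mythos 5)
Your overall strategy---a real-type local model near the $+$ components, a quaternionic-type local model near the $-$ components, glued over the free locus---is the same as the paper's, which however uses only the two invariant open sets $X^{\pm}=Z\cup Y^{\pm}$ (so it needs neither equivariant tubular neighborhoods, which a general locally compact Real space need not admit, nor any triple-overlap cocycle conditions) and glues $\cA(X^+)\otimes_{\bR}\cK_{\bR}$ to $\cA(X^-)\otimes_{\bR}\bH\otimes_{\bR}\cK_{\bR}$ over $Z$ by a single isomorphism $\varphi$. The genuine content of the theorem is the existence of that gluing isomorphism over the free locus, and this is exactly the step your argument does not establish. It is true that $\sigma_-=\Ad(J_-J_+)\circ\sigma_+$, but this does not mean the two Real structures are equivalent: a spectrum-fixing intertwiner $\Ad(v)$, $v\co U\to \cU(\cH)$, must satisfy $\Ad\bigl(v(x)J_+\bigr)=\Ad\bigl(J_-v(\iota x)\bigr)$, hence $v(x)J_+=\mu(x)\,J_-v(\iota x)$ for a circle-valued $\mu$, and iterating this relation forces $\mu(x)\overline{\mu(\iota x)}=J_-^2J_+^2=-1$, i.e.\ you need a continuous $\mu$ on the overlap with $\mu(\iota x)=-\mu(x)$. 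Such an anti-equivariant circle map need not exist: if $Y$ is an isolated fixed point of sign $-$ on a $4$-manifold, the deleted invariant neighborhood is equivariantly $S^{0,4}\times\bR$, whose orbit space is homotopy equivalent to $\bR\bP^3$, and $\mu$ would exhibit the generator $c\in H^1(\bR\bP^3;\bZ_2)$ as pulled back from $S^1$, contradicting $c^2\neq 0$. Kuiper's theorem is beside the point here: the structure group for Real bundles of elementary algebras is the projective unitary/antiunitary group $P\cU'$, not $\cU(\cH)$, and comparing the real and quaternionic forms over the free part is a descent problem over $\overline Z=Z/\iota$ whose obstruction lives in cohomology of $\overline Z$ with coefficients twisted by the double cover; contractibility of $\cU(\cH)$ does not remove it. This is precisely why the paper, after stabilizing by $\cK_{\bR}$, identifies both $\cA(Z)\otimes_{\bR}\cK_{\bR}$ and $\cA(Z)\otimes_{\bR}\bH\otimes_{\bR}\cK_{\bR}$ as section algebras of $P\cU'$-bundles induced from the covering $Z\to\overline Z$ and constructs $\varphi$ (canonically, by fixing an isomorphism $\bC\otimes_{\bR}\bH\otimes\cK\cong\cK$) at the bundle level rather than by pointwise inner automorphisms.

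Two further points are asserted rather than proved in your proposal. First, the complexification of a glued algebra is the section algebra of a locally trivial $\cK$-bundle over $X$; it equals $C_0(X,\cK)$ only after one shows the Dixmier--Douady class vanishes, which the paper does via the Mayer--Vietoris sequence for $X^+\cup X^-=X$, relating the class to the Phillips--Raeburn invariant of $\varphi$ on $Z$. Second, your canonicity argument misidentifies the coefficients: \v{C}ech cohomology with values in the soft sheaf of $\cU(\cH)$-valued functions vanishes for trivial reasons, but the actual rigidity obstruction involves the projective/antiunitary coefficients twisted by $\iota$, and as the paper's Remark \ref{rem:nonunique} explains it is a twisted Phillips--Raeburn class in $H^2$ of $\overline Z$ with $\bZ$-coefficients twisted by the covering, which need not vanish. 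So one cannot expect the algebra to be determined up to canonical isomorphism by $(X,\iota,\alpha)$ in general; the ``canonical'' choice in Theorem \ref{thm:KRsignexists} refers to a preferred construction (a canonical homotopy class of $\varphi$), not to uniqueness of every algebra with the listed properties, and Theorem \ref{thm:Krsignles} is what controls the resulting ambiguity at the level of $K$-groups.
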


\begin{proof}
Let $Y^+$ be the union of
the components of the fixed set with $+$ sign choice, $Y^-$ be the
union of the components of the fixed set with $-$ sign choice, and $Z
= X \smallsetminus( Y^+ \amalg Y^-)$, which is the open subset of $X$ 
on which the involution $\iota$ acts freely. Let $\cA(Z)$
denote the commutative real $C^*$-algebra $\cA(Z) = \left\{ f\in C_0(Z)
\mid f(\iota (x))= \overline{f(x)} \right\}$. {\lp}Recall that the
$K$-theory of $\cA(Z)$ is identical to $KR^*(Z)$.{\rp}.
First we will show that there is a spectrum-fixing isomorphism of
real $C^*$-algebras 
\[
\varphi\co \cA(Z) \otimes_{\bR} \cK_{\bR} \xrightarrow{\cong} \cA(Z) 
 \otimes_{\bR} \bH \otimes_{\bR} \cK_{\bR}.
\]
{\lp}The induced isomorphism on complexifications is equivariant for the
involution $\sigma$.{\rp} 
In fact, there is a canonical choice for $\varphi$ (up to homotopy).
Then we can define $A_\alpha(X)$ by ``clutching.'' 

The algebra $\cA(Z) \otimes_{\bR} \cK_{\bR}$ is, as
explained in \cite[\S3]{MR1018964}, the algebra of sections
(vanishing at infinity on $Z$) of a bundle over $\overline Z =
Z/\iota$ of real $C^*$-algebras with fibers
$\cK$ and structure group $P\cU'$, the projective infinite-dimensional
unitary/antiunitary group. This group is a semidirect product of
$P\cU$ by $\bZ_2$ (acting by complex conjugation), and the bundle is
induced from the $\bZ_2$-bundle 
$Z\to \overline Z$ defined by the free involution $\iota$. Now
$\bC\otimes_{\bR}\bH \cong M_2(\bC)$, so $\cA(Z) 
\otimes_{\bR} \bH \otimes_{\bR} \cK_{\bR}$ is also the algebra of
sections of a bundle over $\overline Z$ with fibers $M_2(\bC)\otimes
\cK\cong \cK$ and the same structure group, and since the bundle came
from the original $P\cU'$-bundle (via tensoring with $\bH$) and
induces the same covering map $Z\to \overline Z$, the bundles are
isomorphic (as $P\cU'$-bundles). This guarantees existence of the
desired isomorphism $\varphi$. In fact, if we fix an isomorphism
$\bC\otimes_{\bR}\bH \otimes \cK \to \cK$ (which is unique up to
homotopy), we get a canonical choice of $\varphi$, also unique up to
homotopy. 

Now, let $X^+=Z\cup
Y^+$, $X^-=Z\cup Y^-$, which are both open subsets of $X$. $\cA(Z)$ is
an ideal in each of the commutative real $C^*$-algebras 
$\cA(X^\pm) = \left\{ f\in C_0(X^\pm)  \mid 
f(\iota (x))= \overline{f(x)} \right\}$. We can construct $A_\alpha(X)$ as
the algebra of sections of a bundle of algebras obtained by clutching
the stabilized bundles for $\cA(X^+)$ and for $\cA(X^-)\otimes \bH$
together over $\overline Z$ via (the bundle isomorphism associated to)
$\varphi$, i.e., we construct $A_\alpha(X)$ by gluing
$\cA(X^+)\otimes _{\bR}\cK_{\bR}$ {\lp}which represents $KR^*(X^+)${\rp} to
$\cA(X^-)\otimes _{\bR}\bH \otimes _{\bR}\cK_{\bR}$ {\lp}which represents
$KR^*_\alpha(X^-)\cong KSp^*(X^-)${\rp} over $Z$ using $\varphi$. It
remains to show that we can choose $\varphi$ so that the
Dixmier-Douady invariant of $A_\alpha(X)\otimes_{\bR}\bC$
vanishes. This follows from the Mayer-Vietoris sequence for the
diagram
\[
\xymatrix{& X^+ \ar[rd] & \\ Z \ar[ur] \ar[dr] & & X\\
& X^- \ar[ur] & }
\]
since the Dixmier-Douady invariant is trivial over $X^+$ and $X^-$ (by
construction) and thus the Dixmier-Douady invariant in $H^3(X)$ comes
from the Phillips-Raeburn invariant of $\varphi$ in $H^2(Z)$ via the
Mayer-Vietoris boundary map. This invariant will be trivial for the
canonical choice.  (See the discussion in Remark
\ref{rem:nonunique} below for more details.)
\end{proof}

\begin{remark}
\label{rem:nonunique}
One has to be cautious; even though Theorem \ref{thm:KRsignexists}
guarantees \emph{existence} of $A_\alpha(X)$, it does not guarantee
\emph{uniqueness}, since the isomorphism $\varphi$ is only determined
up to an automorphism of the $P\cU'$-bundle over $\overline Z$. 
Such an automorphism, which (if $\overline Z$ is connected) we can
assume is in the connected component of the identity in the
automorphism group, is simply a section of the bundle of topological
groups $\cB_{P\cU} = (Z \times_{\overline Z} P\cU) \to \overline Z$, where the
covering group $\bZ_2$ acts on $\cU$ and thus on $P\cU$ by complex
conjugation. The automorphism will not affect the $K$-groups if it is
\emph{inner}, i.e., comes from a section of  $\cB_{\cU} = (Z \times_{\overline Z}
\cU) \to \overline Z$. From the exact sequence in sheaf cohomology for
the exact sequence of sheaves of groups
\[
1 \to \cB_{\bT} \to \cB_{\cU}  \to \cB_{P\cU} \to 1,
\]
where $\cB_{\bT}  = (Z \times_{\overline Z} \bT) \to \overline Z$ and
we identify bundles of topological groups with their sheaves of
sections, and from the fact that the sheaf $\cB_{\cU}$ is fine since
$\cU$ is contractible, we see that the obstruction to an automorphism
being inner lies in $H^1(\overline{Z}, \cB_{\bT}) = H^2(\overline{Z}, 
\uwave{\bT})$, where $\uwave{\bT}$ is the sheaf of local sections of
$\cB_{\bT}$. The obstruction group is via the exact sequence of sheaves 
\[
0 \to \uwave{\bZ} \to \uwave{\bR} \to \uwave{\bT} \to 1
\]
identifiable with $H^2(\overline{Z},\uwave{\bZ})$, where $\uwave{\bZ}$
is the locally constant sheaf with stalks $\bZ$ and twisting given by
the covering map $Z\to \overline{Z}$. The obstruction is what we can
call the \emph{twisted Phillips-Raeburn invariant} (cf.\
\cite[\S1]{MR1018964}). It vanishes when
$H^2(\overline{Z},\uwave{\bZ}) = 0$, and in particular when $\dim Z =
1$, so in this case $A_\alpha(X)$ is unique up to spectrum-fixing
Morita equivalence. $\square$
\end{remark}

While we will always use the particular algebra $\cA_\alpha(X)$
constructed in the proof of Theorem \ref{thm:KRsignexists}, any
other algebra satisfying the properties in the Theorem
gives the same $K$-groups up to extensions. 

\begin{theorem}
\label{thm:Krsignles}
For any $A_\alpha(X)$ with the properties of 
\textup{Theorem \ref{thm:KRsignexists}},
the topological $K$-groups fit into the long exact sequence 
\begin{equation}
\label{eqn:twKRles}
\cdots\to KR^{-i}(Z)\to K_i(A_\alpha(X))
  \to KO^{-i}(Y^+)\oplus KSp^{-i}(Y^-)\to KR^{-i+1}(Z)
  \to \cdots,
  \end{equation}
and the groups $K_*(A_\alpha(X))$ are uniquely determined
at least up to extensions.
\end{theorem}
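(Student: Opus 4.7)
The plan is to realize the sequence \eqref{eqn:twKRles} as the standard long exact sequence in topological $K$-theory associated to a short exact sequence of real $C^*$-algebras
$$0 \to I \to A_\alpha(X) \to A_\alpha(X)/I \to 0,$$
where $I$ is the ideal consisting of sections of the bundle-of-algebras from the proof of Theorem \ref{thm:KRsignexists} supported over the open $\iota$-invariant subset $Z = X \smallsetminus (Y^+ \amalg Y^-)$, and the quotient is supported on the fixed set $Y^+ \amalg Y^-$.

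First I would identify the two outer $K$-groups. The restriction of $A_\alpha(X)$ to $Z$ is by construction Morita equivalent (over $Z$) to $\cA(Z)$, so $K_i(I) \cong KR^{-i}(Z)$ by the very definition of $KR$-theory. Since $Y^+$ and $Y^-$ are disjoint closed $\iota$-invariant subsets of $X$, the quotient splits as a direct sum $B^+ \oplus B^-$ of the quotients associated to $Y^+$ and $Y^-$. Conditions (2) and (3) of Theorem \ref{thm:KRsignexists} identify $B^+$ and $B^-$ as Morita equivalent to $C_0^{\bR}(Y^+)$ and $C_0^{\bH}(Y^-)$ respectively, so Morita invariance of $K$-theory together with the standard identifications $K_i(C_0^{\bR}(Y^+)) \cong KO^{-i}(Y^+)$ and $K_i(C_0^{\bH}(Y^-)) \cong KSp^{-i}(Y^-)$ yield $K_i(A_\alpha(X)/I) \cong KO^{-i}(Y^+) \oplus KSp^{-i}(Y^-)$.

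Plugging these identifications into the long exact sequence in topological $K$-theory for real $C^*$-algebra extensions produces exactly \eqref{eqn:twKRles}. For the uniqueness assertion, any other algebra $A'_\alpha(X)$ satisfying the conclusions of Theorem \ref{thm:KRsignexists} has ideal and quotient Morita equivalent, as ideals and quotients over $X$, to those above, hence fits into a long exact sequence with the same outer terms. The middle group $K_i(A'_\alpha(X))$ is therefore determined as an extension of $\ker$ by $\coker$ of the two adjacent connecting maps. Different choices of the clutching isomorphism $\varphi$ in the construction of $A_\alpha(X)$ may genuinely alter the connecting homomorphisms (reflecting the twisted Phillips-Raeburn obstruction of Remark \ref{rem:nonunique}) and hence change the extension class, which is precisely why one cannot assert more than uniqueness up to extensions. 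The main technical point I would need to verify carefully is that the Morita equivalences supplied by Theorem \ref{thm:KRsignexists} really do induce the claimed $K$-theory identifications at each piece — in particular that the quaternionic structure at $Y^-$ yields $KSp$ rather than some other variant — but this is routine from how $KO$, $KR$, and $KSp$ are read off from real and quaternionic continuous-trace $C^*$-algebras.
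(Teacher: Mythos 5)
Your derivation of the exact sequence itself is fine and is essentially the paper's own argument: take the extension of real $C^*$-algebras determined by the open invariant set $Z$, identify the $K$-theory of the ideal with $KR^{-i}(Z)$ (freeness of $\iota$ on $Z$), and identify the quotient's $K$-theory with $KO^{-i}(Y^+)\oplus KSp^{-i}(Y^-)$ via the Morita equivalences in conditions (2) and (3).

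The gap is in your uniqueness argument, and it is not merely cosmetic. You write that different choices of the clutching isomorphism $\varphi$ ``may genuinely alter the connecting homomorphisms \dots and hence change the extension class, which is precisely why one cannot assert more than uniqueness up to extensions.'' This is backwards. From the long exact sequence, $K_i(A_\alpha(X))$ sits in a short exact sequence whose outer terms are the cokernel and kernel of the two adjacent connecting maps; if those connecting maps could really vary with the choice of $A_\alpha(X)$, then the outer terms themselves would vary, and you could not conclude that the groups are determined even up to extensions --- your premise would defeat your conclusion. What is actually needed (and what the paper proves) is the opposite statement: the connecting maps $KO^{-i}(Y^+)\to KR^{-i+1}(Z)$ and $KSp^{-i}(Y^-)\to KR^{-i+1}(Z)$ are \emph{independent} of the choice, because each is the boundary map of the extension obtained by restricting to the open set $X^{\pm}=Z\cup Y^{\pm}$, and these restrictions are pinned down by the hypotheses of Theorem \ref{thm:KRsignexists}; they compute the $KR$-theory of $X^+$ and the symplectic ($KSp$-type) theory of $X^-$, so the maps are the boundary maps in those fixed exact sequences. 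Once the connecting maps are choice-independent, the kernels and cokernels are canonical and the only residual ambiguity is the extension class --- which is exactly what ``uniquely determined at least up to extensions'' permits. The nonuniqueness of $\varphi$ discussed in Remark \ref{rem:nonunique} can affect the algebra and conceivably the extension class, but not these boundary maps; your proof needs that point and currently asserts its negation.
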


\begin{proof}
Let $A_\alpha(Z)$ be the ideal of $A_\alpha(X)$ associated to $Z$, and
let $A_\alpha(Y^\pm)$ be the quotient associated to $Y^\pm$.
The long exact sequence  
\begin{equation}
\cdots\to K_i(A_\alpha(Z))\to
  K_i(A_\alpha(X))\to K_i(A_\alpha(X\smallsetminus Z))\to K_{i-1}(A_\alpha(Z))
  \to \cdots.
  \end{equation}
follows from the long
exact $K$-theory sequence of the extension of real $C^*$-algebras
associated to the open inclusion $Z\subseteq X$ (see for example
\cite[equation ($*$), p.\ 376]{MR1018964}). Since the involution
$\iota$ on $Z$ is free, $K_i(A_\alpha(Z))\cong KR^{-i}(Z)$.  Also 
\begin{align*}
K_i(A_\alpha(X\smallsetminus Z)) &\cong K_i(A_\alpha(Y^+\amalg Y^-))\\
&\cong K_i(A_\alpha(Y^+))\oplus K_i(A_\alpha((Y^-)).
\end{align*}
But
$$K_i(R_\alpha(Y^+))\cong KO^{-i}(Y^+),$$
since $Y^+$ has trivial involution. 

For a space $M$ where all the components of $M^G$ have $-$ sign
choice, the only difference is that the quotient of $A_\alpha(M)$  
associated to any component of $M^G$ is Morita equivalent (over
$\bR$) to $C_0^\bH(M^G)$ (instead of $C_0^\bR(M^G)$). This defines a
symplectic structure (instead of a real structure). In this case,
$KR$-theory with a sign choice reduces to what is sometimes referred
to as $KSp$- or $KH$-theory, which is just ordinary $KO$-theory with a
shift in index by $4$. Therefore,  
\[
K_i(A_\alpha(Y^-))\cong  KO^{-i-4}(Y^-) \cong KSp^{-i}(Y^-).
\]
Putting this all together gives the long exact sequence \eqref{eqn:twKRles}.

Since the connecting maps 
\[
KO^{-i}(Y^+) \to KR^{-i+1}(Z)\quad \text{\textup{and}}\quad
KSp^{-i}(Y^-) \to KR^{-i+1}(Z)
\]
in \eqref{eqn:twKRles} are determined
by the $KR$-theories of $X^+$ and $X^-$, respectively,
we conclude that regardless
of what choice one makes of $A_\alpha(X)$ satisfying the conditions
of Theorem \ref{thm:KRsignexists},
the groups $K_*(A_\alpha(X))$ are uniquely determined
at least up to extensions.
\end{proof}

\begin{definition}
\label{def:Krsign}
Let $X$ be a Real locally compact space with an assignment of signs
$\alpha$ to the components of the fixed set. Let $\cA_\alpha(X)$
be the canonical real continuous-trace algebra constructed in
\textup{Theorem \ref{thm:KRsignexists}}. We define $KR_\alpha^*(X)$
to be the topological $K$-theory of $\cA_\alpha(X)$ (in the sense of
\cite[\S3]{MR1018964}). Note that these groups fit into the exact
sequence given in \textup{Theorem \ref{thm:Krsignles}}.
\end{definition}

\begin{corollary}
$KR_\alpha^{-j}$ has periodicity with period $8$.
\end{corollary}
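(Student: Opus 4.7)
The plan is essentially to invoke a known periodicity theorem rather than to prove one from scratch. By Definition \ref{def:Krsign}, $KR_\alpha^{-j}(X)$ is defined as $K_j(\cA_\alpha(X))$, the topological $K$-theory of a real $C^*$-algebra in the sense of \cite[\S3]{MR1018964}. Eight-fold Bott periodicity for the topological $K$-theory of real Banach (and in particular real $C^*$-) algebras is a classical result going back to Wood, Karoubi and Schr\"oder, so the isomorphism $K_j(\cA_\alpha(X)) \cong K_{j+8}(\cA_\alpha(X))$ is automatic from the choice of framework in which $KR_\alpha$ is defined, and the corollary follows immediately.

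As a consistency check one can also read the periodicity directly off the long exact sequence \eqref{eqn:twKRles} of Theorem \ref{thm:Krsignles}. Each of the groups $KR^{-i}(Z)$, $KO^{-i}(Y^+)$, and $KSp^{-i}(Y^-) \cong KO^{-i-4}(Y^-)$ is already $8$-periodic in $i$ by the classical Bott periodicity theorems for $KR$ and $KO$. The boundary maps in the sequence arise from the extension $0 \to \cA_\alpha(Z) \to \cA_\alpha(X) \to \cA_\alpha(Y^+\amalg Y^-) \to 0$ of real $C^*$-algebras, and hence commute with the natural Bott periodicity isomorphisms on each term. Applying the five lemma to the induced map from the sequence to its shift by $8$ then forces the same periodicity on the middle term $K_i(\cA_\alpha(X)) = KR_\alpha^{-i}(X)$.

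The only substantive point to verify, should one want to be careful, is the naturality of Bott periodicity with respect to the boundary maps of a short exact sequence of real $C^*$-algebras. This is standard (the Bott map is induced by an external product with a fixed class, and the six-term exact sequence is natural for $*$-homomorphisms), so no real obstacle arises. The upshot is that periodicity of $KR_\alpha$ is inherited from periodicity of the unadorned real $C^*$-algebraic $K$-theory, and the corollary is a one-line consequence of Definition \ref{def:Krsign}.
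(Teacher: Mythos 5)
Your first paragraph is exactly the paper's argument: since $KR_\alpha^{-j}(X)$ is by definition the topological $K$-theory of the real Banach algebra $\cA_\alpha(X)$, eight-fold Bott periodicity for real Banach algebra $K$-theory gives the result immediately. The additional five-lemma consistency check via the long exact sequence is fine but unnecessary; the proposal is correct and follows the same route as the paper.
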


\begin{proof}
This is immediate from Bott periodicity for topological $K$-theory of
real Banach algebras.
\end{proof}

It is now easy to see that
$KR$-theory with a sign choice can be computed using a
generalization of the equivariant Atiyah-Hirzebruch spectral sequence 
of \cite[(A.2)]{Karoubi:2005}
\begin{equation}
\label{eqn:Bredon}
E_2^{p,q}=H^p_G(X;\uwave{KR}^q)\Rightarrow KR^{p+q}(X),
\end{equation}
where $\uwave{KR}^*$ is the Bredon coefficient system for $G$ associated to
$KR$. 

As described in \cite{Karoubi:2005}, $\uwave{KR}^*(G)=K^*$ and
$\uwave{KR}^*(\pt)=KO^*$. We are now allowing for different components
of the fixed set to have symplectic or  orthogonal bundles,
corresponding to the coefficient system being $KO$ or $KSp$.  
\begin{theorem}
\label{thm:chargedBredonSS}
There is a spectral sequence 
\begin{equation}
\label{eqn:chargedBredon}
E_2^{p,q}=H^p_G(X;\uwave{KR_\alpha}^q)\Rightarrow KR_\alpha^{p+q}(X).
\end{equation}
where $\uwave{KR_\alpha}^*(G)\cong K^*$ and
$$\uwave{KR_\alpha}^{-i}(\pt_j) = \left\{
        \begin{array}{ll}
          KO^{-i},   & \text{ if }\alpha_j=+, \\
           \strut KSp^{-i},   & \text{ if }\alpha_j=-.
          \end{array}\right.$$
\end{theorem}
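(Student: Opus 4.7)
The plan is to mimic the construction of the equivariant Atiyah--Hirzebruch spectral sequence of \cite{Karoubi:2005}, but using the exact sequence of Theorem \ref{thm:Krsignles} in place of the ordinary localization sequence for $KR$. Concretely, one chooses an equivariant CW structure on $X$ in which $Y^+$ and $Y^-$ are subcomplexes and in which the free part $Z$ is given a $G$-CW decomposition. Let $X_p$ denote the $p$-skeleton; then $X_p$ is a $\iota$-invariant closed subspace of $X$, and the ideals $\cA_\alpha(X\setminus X_{p-1})$ inside $\cA_\alpha(X)$ define a decreasing filtration of $\cA_\alpha(X)$ by closed ideals. The associated exact couple in topological $K$-theory of real $C^*$-algebras gives a spectral sequence converging to $K_{-(p+q)}(\cA_\alpha(X)) = KR_\alpha^{p+q}(X)$, with $E_1^{p,q} = K_{-(p+q)}(\cA_\alpha(X_p),\cA_\alpha(X_{p-1}))$, exactly as in the classical Bredon construction.

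The main step is to identify the $E_1$-term (and hence $E_2$) in terms of the proposed coefficient system. The relative group $K_{-(p+q)}(\cA_\alpha(X_p),\cA_\alpha(X_{p-1}))$ decomposes as a sum over equivariant $p$-cells. For a cell of free orbit type $G/\{e\}\times D^p$, the relevant quotient of $\cA_\alpha$ is Morita equivalent to the real form of $C_0(\bR^p)$ associated to the free involution on $G\times \bR^p$; by the already-used fact that $K_*$ of this algebra equals $KR^*$ of a space with free $G$-action, and that $KR$ of a free $G$-orbit is complex $K$-theory (via \eqref{eqn:KRK}), this contributes $K^q$ to the coefficient group. For a cell in $Y^+$ (respectively $Y^-$) of orbit type $G/G\times D^p$, property (2) (respectively (3)) of Theorem \ref{thm:KRsignexists} identifies the relevant subquotient of $\cA_\alpha(X)$ as Morita equivalent over $\bR$ to $C_0^{\bR}(D^p,\partial D^p)$ (respectively $C_0^{\bH}(D^p,\partial D^p)$), contributing $KO^q$ or $KSp^q\cong KO^{q-4}$. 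This identifies the orbit-wise coefficient system as $\uwave{KR_\alpha}^q$ exactly as stated.

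The $d_1$ differential is then read off the boundary maps in the long exact sequence of Theorem \ref{thm:Krsignles} applied to the pairs $(X_p,X_{p-1})$, and one checks by the standard argument (exactly as in Karoubi--Weibel) that the resulting cochain complex is the cellular cochain complex computing Bredon cohomology $H^p_G(X;\uwave{KR_\alpha}^q)$. This gives the $E_2$-identification and convergence to $KR_\alpha^{p+q}(X)$ at least for $X$ a finite $G$-CW complex; for locally compact $X$ the usual passage to the limit (using compact supports in the cellular model) extends the statement.

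The part requiring the most care is the identification of the free-orbit coefficient $\uwave{KR_\alpha}^*(G) = K^*$: this uses that on $Z$ the algebra $\cA_\alpha(X)$ restricts to $\cA(Z)\otimes_\bR \cK_\bR$ (the choice of $\varphi$ in the proof of Theorem \ref{thm:KRsignexists} plays no role here, since on free orbits the quaternionic stabilization is absorbed into $\cK_\bR$), so the local coefficient reduces to $KR$ of a single free $G$-orbit, which by Atiyah's Proposition 3.3 \cite{MR0206940} is $K^*$. Once this point is verified, the rest of the argument is a formal recasting of the Karoubi--Weibel construction with the only modification being the use of Theorem \ref{thm:Krsignles} in place of the plain $KR$ localization sequence.
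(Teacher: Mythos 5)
Your proposal is correct and follows essentially the same route as the paper: the paper's own proof is just the remark that one filters by the equivariant skeletal filtration (with fixed cells separated by sign) and takes the associated spectral sequence as in Karoubi--Weibel, which is precisely your construction, only spelled out at the level of the ideal filtration of $\cA_\alpha(X)$ and the cell-by-cell identification of $E_1$ via Theorem \ref{thm:KRsignexists} and \eqref{eqn:KRK}. Your extra care with the free-orbit coefficient (where the quaternionic twist is killed by Morita equivalence after complexification over a free orbit) is exactly the point the paper leaves implicit in asserting $K_i(A_\alpha(Z))\cong KR^{-i}(Z)$.
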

\begin{proof}
The proof is quite similar to the case handled in
\cite{Karoubi:2005}. We filter $K^*_\alpha(X)$ using the equivariant
skeletal filtration, but with fixed cells separated into two types.
Then this is just the spectral sequence associated to this
filtration. The picture of the coefficient system is as in Figure
\ref{Fig:S11} (right side). 
\end{proof}

If we remove the $2$ fixed points from $S^{1,1}$ we are left with $2$
copies of $\bR$ that are exchanged by the involution. This gives
$KR^*_\alpha(S^{1,1}\smallsetminus \text{fixed points})\cong K^{*-1}$ by 
\eqref{eqn:KRK}. The type
IA theory has $O8^+$-planes (hence orthogonal bundles) at both fixed
points, so has $KO^*$ at both fixed points (see Figure \ref{Fig:S11})
and matches with the spectral sequence as described in
\cite{Karoubi:2005}. While motivated by physics, we are just
decorating $G$-$CW$-complexes with some extra information on the
equivariant cells of the form $(G/G) 
\times e^n$ that we have called ``sign.'' We show in \cite{DMDR} 
that the sign can be given a geometric interpretation in the $T$-dual
theory, thus giving a completely mathematical description of $T$-duality.

\begin{figure}
\begin{tikzpicture}[scale=4]
\draw [thick] (0,0) -- (1,0);
\fill (0,0) circle (0.75pt);
\fill (1,0) circle (0.75pt);
\node [above] at (.5,0) {$K^*$};
\node [above] at (0,0) {$KO^*$};
\node [above] at (1,0) {$KO^*$};
\draw [thick] (2,0) -- (3,0);
\fill (2,0) circle (0.75pt);
\fill (3,0) circle (0.75pt);
\node [above] at (2.5,0) {$K^*$};
\node [above] at (2,0) {$KO^*$};
\node [above] at (3,0) {$KSp^*$};
\end{tikzpicture}
\caption{Coefficient systems for the type IA and $\widetilde{IA}$ theories.}
\label{Fig:S11}
\end{figure}

Note that flipping the sign of every component of the fixed set 
exchanges $KO$ and $KSp$ and so just results in a shift of the
index by $4$. For example, type IIA theory on $S^{1,1}$ with $2$
$O8^-$-planes does not make physical sense since it is not
supersymmetric, and it is mathematically uninteresting since it is just
the usual theory with an index shift. 
 
We can now turn our attention to the only case of $KR$-theory of a
circle with a non-trivial sign choice, corresponding to the type
$\widetilde{IA}$ theory. 

\subsection{The Type $\widetilde{IA}$ Theory}

As noted in Section \ref{sec:orient}, the compactification manifold
for the $\widetilde{IA}$ theory is $S^{1,1}$, but with an
$O8^-$-plane at one fixed point and an $O8^+$-plane at the
other. Therefore, $Dp$-branes are classified by
$KR^{p-9}_{(+,-)}(S^{1,1})$. The index is determined as a shift by one
from the $T$-dual theory $KR^{p-8}(S^{0,2})$.  

Recall that
$$S^{1,1}\smallsetminus S^{1,0}\cong \bR^{1,0}\times S^{0,1},$$
with an involution that exchanges the $2$ copies of $\bR$. Therefore
\begin{align}
KR_{\alpha}^{-i}(S^{1,1}\smallsetminus S^{1,0})&\cong
KR^{-i}(\bR^{1,0}\times S^{0,1})\nonumber\\ 
&\cong K^{-i}(\bR)\nonumber\\
&\cong K^{-i-1},
\end{align}
for all $\alpha$. For $\alpha=(+,-)$,
\begin{equation}
KR^{-i}_\alpha(S^{1,0})\cong KO^{-i}\oplus KSp^{-i}.
\end{equation}
Plugging these into equation \eqref{eqn:twKRles} we get the long exact sequence
\begin{equation}
\label{eqn:lespm}
\xymatrix{\cdots\ar[r] & 
  K^{-i-1}\ar[r]& KR_{(+,-)}^{-i}(S^{1,1})\ar[r]& KO^{-i}\oplus
  KSp^{-i}\ar[r]^(.65){\delta}& K^{-i}\ar[r]&\cdots.} 
\end{equation}
The map $\delta$ is complexification on the first summand and doubling on
the second summand, since symplectic bundles contain $2$ complex
bundles. Furthermore, the long exact sequence splits into $2$ parts 
\begin{align}
\xymatrix{0\ar[r] & 
   KR_{(+,-)}^{0\pmod{4}}(S^{1,1})\ar[r]&
   \bZ\oplus\bZ\ar[r]^(.57){\delta}&
   \bZ\ar[r]&KR_{(+,-)}^{-3\pmod{4}}(S^{1,1})\ar[r]&0.}\\ 
   \xymatrix{0\ar[r] & 
   KR_{(+,-)}^{-2\pmod{4}}(S^{1,1})\ar[r]& \bZ_2\ar[r]^(.52){\gamma}&
   \bZ\ar[r]^(.3){\sigma}&KR_{(+,-)}^{-1\pmod{4}}(S^{1,1})\ar[r]&\bZ_2\ar[r] & 0.} 
\end{align}
The map $\delta$ is $(m,n)\mapsto m+2n$, which is surjective. This
means $KR_{(+,-)}^{0\pmod{4}}(S^{1,1})\cong\bZ$ and
$KR_{(+,-)}^{-3\pmod{4}}(S^{1,1})\cong 0$. $\gamma$ must be $0$, showing
$KR_{(+,-)}^{-2\pmod{4}}(S^{1,1})\cong \bZ_2$. This gives us an extension
problem 
\begin{equation}
\label{eq:ext1}
 \xymatrix{0\ar[r] &
   \bZ\ar[r]^(.3){\sigma} &KR_{(+,-)}^{-1\pmod{4}}(S^{1,1})\ar[r]&\bZ_2\ar[r] &
   0.} 
\end{equation}
However, since removing the fixed point with the $-$ sign from
$S^{1,1}_{(+,-)}$ leaves $\bR^{0,1}$, we also have an exact sequence
\[
0 = KSp^{-2} \to KR^{-1}(\bR^{0,1}) \to KR_{(+,-)}^{-1}(S^{1,1}) \to
KSp^{-1}= 0,
\]
and since $KR^{-1}(\bR^{0,1})\cong KO^0 =\bZ$, we see that
$KR_{(+,-)}^{-1}(S^{1,1})\cong\bZ$. The corresponding argument where
we remove the point with the $+$ sign instead shows that
$KR^{-5}_{(+,-)}(S^{1,1})\cong\bZ$.  Putting this all together we find  
\begin{equation}
KR_{(+,-)}^{-i}(S^{1,1})\cong KSC^{-i+1},
\end{equation}
which has periodicity with period $4$.

Now we can see $T$-duality between the type \~{I} and $\widetilde{IA}$
theories as an isomorphism 
\begin{equation}
KR^{-i}(S^{0,2})\cong KR_{(+,-)}^{-i-1}(S^{1,1}).
\end{equation}
The fact that we need to include a charge in the $T$-dual of the IIB
theory on $S^{0,2}$ is contained in the geometry of $S^{0,2}$ in a way
that is explored in \cite{DMDR}. Now let us turn to the different
possibilities of sign choices for $2$-torus orientifolds. 

\subsection{Torus Orientifolds with a Sign Choice}

For a $2$-torus with an involution\footnote{Since this is what's
  needed for physics, we are assuming the torus can be identified with
  a complex smooth curve of genus $1$, and the involution is either
  holomorphic or anti-holomorphic. This is explained further in
  \cite{DMDR}.}  the possible fixed point sets are
empty, $1$ copy of $S^1$, $2$ disjoint copies of $S^1$, $4$ isolated
points, or the entire copy of $T^2$. Obviously, when the fixed set
is empty there are no possible sign choices. Also, when the fixed set
is a single copy of $S^1$ or the entire $2$-torus, then the fixed set
has only a single component. Therefore, there is only one possible
sign choice giving either ordinary $KR$-theory ($+$ sign choice) or an
index shift by $4$ of ordinary $KR$-theory ($-$ sign choice). The only
cases that do not immediately reduce to ordinary $KR$-theory are when
the fixed point set is either $2$ disjoint copies of $S^1$ or $4$
isolated points. 

Let us first consider the orientifold of the $2$-torus with $4$ fixed
points, corresponding to when the involution is
reflection. Topologically our orientifold is 
$S^{1,1}\times S^{1,1}$.  There are $3$ supersymmetric sign choices
for the $4$ fixed points: $\alpha=(+,+,+,+)$, $(+,+,-,-)$, or
$(+,+,+,-)$. (The non-supersymmetric cases $(-,-,-,-)$ 
and $(-,-,-,+)$ can be obtained from $(+,+,+,+)$ and $(+,+,\allowbreak
+,-)$ by an index shift.) 

The case of $(+,+,+,-)$ is considerably more subtle to compute than
the other two, though as shown 
by Witten \cite{Witten:1998-02}, the case of four $O$-planes, three
with a $+$ charge and one with a $-$ charge, does indeed occur in
physics.

\begin{figure}[htb]
\begin{tikzpicture}[scale=2]
\draw [->] (0,-.75) -- (0,.75);
\draw [->] (-.75,0) -- (.75,0);
\fill (0,0) circle (1pt);
\fill (0,.5) circle (1pt);
\fill (.5,0) circle (1pt);
\fill (.5,.5) circle (1pt);
\draw [red, thick] (.25, .25) -- (-.25, .25) -- (-.25, -.25) -- (.25, -.25) 
-- (.25, .25);
\draw [dashed] (-.5,-.5) -- (.5,-.5);
\draw [dashed] (-.5,-.5) -- (-.5,.5);
\draw [dashed] (-.5,.5) -- (.5,.5);
\draw [dashed] (.5,-.5) -- (.5,.5);
\node [below right] at (.5,0) {$\frac{1}{2}$};
\node [above right] at (0,.5) {$\frac{1}{2}$};
\end{tikzpicture}
\caption{Fundamental domain of a $2$-torus with the $4$ fixed points shown.}
\label{Fig:T2fd}
\end{figure}

To determine $KR_\alpha$ in each case, we first need the following
result. 
\begin{proposition}
\label{prop:KRminusfixedset}
Let $X$ be $T^2$ {\lp}realized as $\bR^2/\bZ^2${\rp}
with involution given by multiplication by $-1$. Let $Y$ be
the set of $4$ points fixed by the involution. Then 
\begin{equation}
KR_\alpha^{-i}(X\smallsetminus Y)\cong KSC^{-i-1}\oplus K^{-i-1}\oplus K^{-i-1},
\end{equation}
for all $\alpha$.
\end{proposition}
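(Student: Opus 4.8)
The plan is to identify the space $X\smallsetminus Y$ as explicitly as possible as a free $\bZ_2$-space, and then use the fact (from equation \eqref{eqn:KRK}) that for a free involution $KR^{-i}_\alpha$ reduces to ordinary complex $K$-theory of the quotient, so that the sign choice $\alpha$ plays no role at all—consistent with the ``for all $\alpha$'' in the statement. Concretely, $T^2 = \bR^2/\bZ^2$ with the involution $v\mapsto -v$ has the four $2$-torsion points as fixed set, and removing them leaves a free $\bZ_2$-space whose quotient is a (non-compact) surface. First I would compute the quotient orbifold/manifold: $(X\smallsetminus Y)/\bZ_2$ is an open surface, and I would determine its homotopy type. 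The quotient of $T^2$ by $-1$ is the ``pillowcase,'' a $2$-sphere with four cone points of order $2$; deleting the four cone points leaves $S^2$ minus four points, which is homotopy equivalent to a wedge of three circles. Hence by \eqref{eqn:KRK}, $KR^{-i}_\alpha(X\smallsetminus Y)\cong K^{-i}(S^2\smallsetminus\{4\text{ pts}\})\cong K^{-i}(\bigvee_3 S^1)$.

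Next I would compute $K^{-i}$ of a wedge of three circles: $\widetilde K^{-i}(S^1)\cong K^{-i-1}(\pt)$, so $\widetilde K^{-i}(\bigvee_3 S^1)\cong (K^{-i-1})^3$, giving $K^{-i}(\bigvee_3 S^1)\cong K^{-i}\oplus (K^{-i-1})^3$. This is visibly $8$-periodic and, because of $2$-periodicity of $K$, equals $\bZ^4$ for $i$ even and $\bZ^3$ for $i$ odd. Meanwhile the claimed answer is $KSC^{-i-1}\oplus K^{-i-1}\oplus K^{-i-1}$. So I would need to reconcile $K^{-i}\oplus K^{-i-1}$ with $KSC^{-i-1}$. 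Recall from Section~\ref{sec:reviewKR} that $KSC^{-j}(\pt)$ is $4$-periodic with values $\bZ,\ \bZ,\ \bZ,\ 0$ — wait, more precisely one knows $KSC^0=\bZ$, $KSC^{-1}=\bZ$, $KSC^{-2}=\bZ$, $KSC^{-3}=0$ (or the appropriate variant), and in any case $KSC^{-j}\oplus(\text{nothing})$ does \emph{not} obviously equal $K^{-i}\oplus K^{-i-1}=\bZ\oplus\bZ$ in all degrees. The correct bookkeeping is: the self-conjugate $K$-theory of a point satisfies $KSC^*(\pt)$ has ranks $1,1,1,0$ over the four residues mod $4$, whereas $K^{-i}\oplus K^{-i-1}$ has rank $2$ for every $i$. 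These do not match, which tells me the naive ``pillowcase'' computation must be refined: the involution on $X\smallsetminus Y$ is free but the $KR$-theory involved is $KR$ of the space \emph{with its own involution}, not merely complex $K$-theory of the quotient, unless the quotient covering is the right one. I would therefore re-examine whether $X\smallsetminus Y$, as a Real space, is of the form $W\times S^{0,1}$ (two disjoint copies swapped) — it is \emph{not}, since the quotient map $X\smallsetminus Y\to(X\smallsetminus Y)/\bZ_2$ is a connected double cover, not the trivial one. So instead I would use \eqref{eqn:KRK} only after splitting off the part where the cover is trivial, or more honestly run the Atiyah–Hirzebruch / Bredon spectral sequence \eqref{eqn:Bredon} (valid since the involution is free, with $4$-periodic $E_2$-term $H^p(\overline{Z};\uwave\bZ)\Rightarrow KR^{p+q}$), where $\overline Z=(X\smallsetminus Y)/\bZ_2$ and $\uwave\bZ$ is the local system twisted by the double cover.

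The cleanest route, and the one I would actually write up, is the following. Decompose $X\smallsetminus Y$ equivariantly. Deleting the four fixed points from the pillowcase picture, I would cut $T^2$ along a suitable $\bZ_2$-invariant configuration so that $X\smallsetminus Y$ becomes a union of pieces each of which is either a trivial double cover ($W\amalg W$) contributing complex $K$-theory, or a single piece whose monodromy is the nontrivial element. A convenient model: $T^2\smallsetminus Y$ deformation retracts $\bZ_2$-equivariantly onto a $1$-complex; the quotient complex is a graph, and the double cover of that graph is classified by a cohomology class in $H^1(\text{graph};\bZ_2)=H^1(\bigvee_3 S^1;\bZ_2)=(\bZ_2)^3$. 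I would compute which class arises (the one pulled back from $H^1(T^2;\bZ_2)$ under the ``double'' relationship, which is the full diagonal-type class making exactly one of the three circle-generators lift to a loop and the other two to the nontrivial cover of a circle, i.e. to $S^{0,1}$-type pieces). Then, using $KR^*$ of a circle with the antipodal-type involution (which, via $S^1=S^{0,2}$... no — the free involution on $S^1$ covering a circle gives $KR^*(S^1_{\text{free}})\cong K^{*}(\text{interval})\cong K^*(\pt)$... here I must be careful: the free involution on $S^1$ with quotient $S^1$ is rotation by $\pi$, i.e. $S^{0,2}$, and $KR^{-j}(S^{0,2})=KSC^{-j}$), I would assemble via Mayer–Vietoris: two of the wedge-circles contribute ``free-$S^1$ with trivial cover'' $\cong K^{*-1}$ each, and the remaining circle (together with how they glue) contributes the $KSC^{*-1}$ factor coming from the rotation-by-$\pi$ circle sitting inside $T^2\smallsetminus Y$. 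The main obstacle, as the above digression shows, will be bookkeeping the monodromy of the double cover $X\smallsetminus Y\to\overline Z$ correctly and checking that the Mayer–Vietoris (or spectral sequence) connecting maps vanish or split, so that one genuinely obtains the direct sum $KSC^{-i-1}\oplus K^{-i-1}\oplus K^{-i-1}$ rather than a nontrivial extension; I expect this to fall out because each summand is already realized by an equivariant retract of an open $\bZ_2$-invariant subset of $X\smallsetminus Y$, making the splitting geometric. The independence of $\alpha$ is then automatic since the involution on $X\smallsetminus Y$ is free, so Theorem~\ref{thm:Krsignles} and Definition~\ref{def:Krsign} give $KR^{-i}_\alpha(X\smallsetminus Y)=K_i(\cA(Z))=KR^{-i}(X\smallsetminus Y)$ with no sign data entering.
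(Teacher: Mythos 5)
Your final, ``cleanest'' route---retracting $X\smallsetminus Y$ equivariantly onto a $1$-complex and assembling $KR$ of that graph by Mayer--Vietoris---computes the wrong thing. The groups in the Proposition are the \emph{compactly supported} $KR$-groups of the open surface (that is how $KR$ of a locally compact space is defined in Section \ref{sec:reviewKR}, and it is exactly what enters the exact sequence of Theorem \ref{thm:Krsignles}), and compactly supported $K$-theory is not invariant under the non-proper homotopy equivalence onto a compact spine. Concretely, for the graph $G$ you describe (a circle $S^{0,2}$ with two swapped pairs of circles attached at an $S^{0,1}$), the sequence for the closed subspace $S^{0,2}$ with open complement two copies of $\bR^{1,0}\times S^{0,1}$ reads $0=(K^{-1})^2\to KR^{0}(G)\to KSC^{0}=\bZ$, so $KR^{0}(G)$ is torsion free; running the free-action spectral sequence over $S^1\vee S^1\vee S^1$ gives $KR^{-i}(G)\cong \bZ,\ \bZ^{2}\oplus\bZ_2,\ 0,\ \bZ^{3}$ for $i\equiv 0,1,2,3\pmod 4$. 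The Proposition, by contrast, gives $\bZ_2,\ \bZ^2,\ \bZ,\ \bZ^3$; in particular the $\bZ_2$ in degree $0$ can never come from your spine. The degree shifts in the statement ($KSC^{-i-1}$ and $K^{-i-1}$ rather than $KSC^{-i}$ and $K^{-i}$) are precisely the trace of the noncompact directions that the spine forgets: the $KSC^{-i-1}$ arises from an invariant punctured-disc neighborhood $S^{0,2}\times\bR^{1,0}$ of a deleted fixed point, not from a spine circle, and your assignment of ``$KSC^{*-1}$'' to the rotation-by-$\pi$ circle is unjustified in your set-up (an invariant spine circle contributes $KSC^{*}$). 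Your opening computation has the same defect, on top of the misuse of \eqref{eqn:KRK} for a connected double cover; you caught the second problem but not the first, and the first is the fatal one. (There is also an internal inconsistency in the monodromy bookkeeping: in a suitable basis the cover is nontrivial over exactly \emph{one} spine circle and trivial over the other two, the connected double cover being the $S^{0,2}$-type piece and the trivial one being $S^1\times S^{0,1}$; your Mayer--Vietoris count uses this, but the preceding sentence asserts the opposite.)

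The fallback you mention only in passing---the Karoubi--Weibel/Bredon spectral sequence for the free action---is in fact the paper's proof, but it must be run with \emph{compactly supported} cohomology of the quotient $W=(X\smallsetminus Y)/\iota\cong S^2\smallsetminus\{4\text{ points}\}$: by Poincar\'e duality $H^p_c(W;\bZ)$ is $0,\bZ^3,\bZ$ for $p=0,1,2$, and the twisted groups $H^p_c(W;\bZ(1))$ for the local system of the double cover are $0,\bZ^2,\bZ_2$ (computed, e.g., by excising two slits joining branch points), after which $E_2=E_\infty$ with no room for differentials or extensions and the stated answer drops out. A decomposition argument can also be made to work, but it has to decompose the open surface itself into invariant open pieces (punctured-disc neighborhoods of the four deleted fixed points and swapped open cylinders), using compactly supported $KR$ of each piece and controlling the connecting maps; as written, your graph version would not merely leave an extension problem unresolved---it would produce the wrong groups. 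Your closing remark that the sign choice $\alpha$ is irrelevant because the involution on $X\smallsetminus Y$ is free is correct and agrees with the paper.
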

\begin{proof}
A picture of the fundamental domain of the $T^2$ is shown in Figure
\ref{Fig:T2fd}. 
When we remove the four fixed points and the dashed lines along the
boundary of the fundamental domain, what remains retracts onto the
square with vertices at
$\left(\pm\frac{1}{4},\pm\frac{1}{4}\right)$ shown in color. 
One can proceed to compute $KR_\alpha^*$ using this picture, but it
will be faster to use the spectral sequence of Theorem
\ref{thm:chargedBredonSS}. Since $\iota$
acts freely on $X\smallsetminus Y$, the spectral sequence reduces to
the one studied by Karoubi and Weibel \cite[Example A.3]{Karoubi:2005}.
Let $W=(X\smallsetminus Y)/\iota$, which is diffeomorphic to
$S^2\smallsetminus \{4\text{ points}\}$. The map
$(X\smallsetminus Y) \to W$ is a $2$-to-$1$ covering map.
The spectral sequence has $E_2^{p,q}=0$ for $q$ odd, and reduces to
$H^p_c(W, \bZ(i)) \Rightarrow KR^{p+2i}_\alpha(X\smallsetminus Y)$,
where $\bZ(i)=\bZ$ (the constant sheaf) for $i$ even and
$\bZ(i)$ is the nontrivial local coefficient system determined by the
covering map $(X\smallsetminus Y) \to W$ for $i$ odd.
By Poincar\'e duality, $H^p_c(W, \bZ)\cong H_{2-p}(W, \bZ)$, which is
$\bZ$ for $p=2$, $\bZ^3$ for $p=1$, $0$ for $p=0$. The groups
$H^p_c(W, \bZ(1))$ are slightly harder to compute, but can be
obtained, for example, from the exact sequence
\[
\cdots \to H^p_c(\bR\times S^1, \bZ(1)) \to H^p_c(W, \bZ(1)) \to
H^p_c(\bR\amalg \bR, \bZ(1)) \to \cdots,
\]
coming from the fact that deleting two line segments from $W$,
each one running between two of the branch points of the branched
covering $T^2\to S^2$, leaves an open subset diffeomorphic to
$\bR\times S^1$. Here $H^p_c(\bR\amalg \bR, \bZ(1)) \cong
H^p_c(\bR\amalg \bR, \bZ)\cong \bZ^2$ for $p=1$, and $0$ for other values
of $p$, since each component of $\bR\amalg \bR$ is simply connected.
The result is that $H^p_c(W, \bZ(1))$ is isomorphic
to $\bZ^2$ for $p=1$, $\bZ_2$ for $p=2$, and $0$ for other values of
$p$. The spectral sequence is shown in Figure \ref{fig:BSS}.
Note that there is no room for any nontrivial differentials or for any
nontrivial extensions, and the Proposition follows.
\begin{figure}[hbpt]
\[
\xymatrix@R-1.5pc{
q \backslash p & \ar[ddddddd] & 0 & 1 & 2 &\\
\ar[rrrrr] &&&&&\\
0 & &0  & \bZ^3 & \bZ&\\
-1 & &0 & 0 & 0& \\
-2 & &0 & \bZ^2 & \bZ_2& \\
-3 & &0 & 0 & 0&\\
-4 & &0 & \bZ^3 & \bZ&\\
&&&&&}
\]
\caption{$E_2$ of the spectral sequence for computing $KR^*_\alpha(X
  \smallsetminus Y)$. The sequence repeats with vertical period $4$.} 
\label{fig:BSS}
\end{figure}
\end{proof}

For the set of fixed points, $Y$, the three options are
\begin{equation}
KR^{-i}_\alpha(Y) = \left\{
        \begin{array}{ll}
         4KO^{-i},   & \alpha=(+,+,+,+) \\
           2KO^{-i}\oplus2KSp^{-i},   & \alpha=(+,+,-,-) \\
           3KO^{-i}\oplus KSp^{-i},   & \alpha=(+,+,+,-).
          \end{array}
    \right.    
\end{equation}
The case where $\alpha=(+,+,+,+)$ (in the notation of
\cite{Olsen:1999}, this is $T^{1,2}$) just gives ordinary $KR$-theory,
for which we get the calculation
\begin{equation}
\label{KR4fixed}
KR^{-i}(X) \cong KR^{-i}(S^{1,1}) \oplus KR^{-i+1}(S^{1,1})
\cong KO^{-i} \oplus KO^{-i+1} \oplus KO^{-i+1} \oplus KO^{-i+2}.
\end{equation}

The relevant long exact sequence for $\alpha=(+,+,+,-)$ is (via
Proposition \ref{prop:KRminusfixedset})
\begin{multline}
\label{eqn:LES11}
\cdots\to
  KSC^{-i-1}\oplus 2K^{-i-1}\to KR_{(+,+,+,-)}^{-i}(S^{1,1}\times
  S^{1,1})\\ \to 3KO^{-i}\oplus KSp^{-i}\to\cdots.
\end{multline}
This gives an extension problem in determining each of the
$KR$-groups. Therefore, we need to look at some additional long exact
sequences to determine $KR^i_\alpha(S^{1,1}\times S^{1,1})$.  

Let $Y_+=S^{1,1}_{(+,+)}\vee S^{1,1}_{(+,+)}$ be the wedge of $2$ circles
going through the three fixed points with sign choice $+$.
In terms of Figure \ref{Fig:T2fd}, this is the image
of the dotted lines. Then we get a long exact sequence 
\begin{equation}
\label{eqn:LES12}
\xymatrix{\cdots\ar[r] & KR^{-i}(X\smallsetminus Y_+)\ar[r]&
  KR_\alpha^{-i}(X)\ar[r]& KR_\alpha^{-i}(Y_+)\ar[r]&\cdots.}
  \end{equation}
Note that $X\smallsetminus Y_+\cong\bR^{0,2}_-$, where the fixed point
of $\bR^{0,2}$ is given the sign choice $-$. Therefore 
\begin{align*}
KR_\alpha^{-i}(X\smallsetminus Y_+) &\cong KR_-^{-i}(\bR^{0,2})\\
&\cong KSp^{-i+2}.
\end{align*}
To determine $KR^{-i}_{(+,+,+)}(Y_+)$, first note that
this reduces to ordinary $KR$-theory since the sign choices are all
positive. Now consider the split long exact sequence 
\begin{equation}
\label{eq:LESwedgecirc+1}
\xymatrix{\cdots\ar[r] & KR^{-i}(Y_+\smallsetminus \{\pt\})\ar[r]&
  KR^{-i}(Y_+)\ar[r]& KR^{-i}(\pt)\ar[r]\ar@/_/[l]&\cdots,}
\end{equation}
where the basepoint is the joining point of the two circles (a fixed
point with sign $+$). Therefore, $Y_+\smallsetminus\{\pt\}$ is $2$ copies of
$\bR^{0,1}$ and 
$$KR^{-i}(Y_+)\cong KO^{-i+1}\oplus KO^{-i+1}\oplus KO^{-i}.$$
Plugging $KR_\alpha^{-i}(Y_+)$ into the exact sequence
\[
\cdots\to KR^{-i}_\alpha(X\smallsetminus Y_+)\cong KSp^{-i+2} \to
  KR_\alpha^{-i}(X)\to KR_\alpha^{-i}(Y_+)\to \cdots,
\]
we find $KR_{(+,+,+,-)}^{-i}(S^{1,1}\times
S^{1,1})$ is $\bZ$ if $i=4$ or $6$, $\bZ^2$ for $i=5$, and
$\bZ_2^2$ for $i=3$. There are extension problems for the other $4$
indices mod $8$.  

To solve the remaining extension problems, we can repeat the same
process, but use the space $Y_-$ which is the one point union of $2$
circles joined at the fixed point with sign $-$ and going through $2$
of the fixed points with sign choice $+$. This space is the 
image of the coordinate axes in Figure \ref{Fig:T2fd}. Note that
$X\smallsetminus Y_-\cong \bR^{0,2}$ (with a $+$ sign at the fixed point).
If we remove one circle, which we can identify with $S^{1,1}_{(+,-)}$,
from $Y_-$, then what
remains is $\bR^{0,1}$ (with a $+$ sign), so we get an exact sequence
\[
\cdots \to KR^{-i}(\bR^{0,1}) \to KR^{-i}_\alpha(Y_-) \to
KR^{-i}_{(+,-)}(S^{1,1}) \to \cdots,
\]
or in other words,
\begin{equation}
\cdots \to KO^{-i+1} \to
KR^{-i}_\alpha(Y_-)\to KSC^{-i+1} \to KO^{-i+2} \to \cdots.
\label{eq:Yminus}
\end{equation}
In fact \eqref{eq:Yminus} splits, i.e., $KR^{-i}_\alpha(Y_-)\cong 
KO^{-i+1} \oplus KSC^{-i+1}$, since the inclusion $S^{1,1}_{(+,-)}
\hookrightarrow Y_-$ is split by the (sign-preserving) ``fold map''
sending both circles in $Y_-$ onto $S^{1,1}_{(+,-)}$.
Putting our result for $Y_-$ into the exact sequence
\[
\cdots \to KR^{-i}(\bR^{0,2})\cong KO^{-i+2} \to KR^{-i}_\alpha(X) \to 
KR^{-i}_\alpha(Y_-) \to KO^{-i+3} \to \cdots
\]
gives that $KR_{(+,+,+,-)}^{-i}(S^{1,1}\times
S^{1,1})$ is $\bZ$ for $i=0$, $\bZ^2$ for $i=1$, $\bZ\oplus(\bZ_2)^2$ for
$i=2$ (for this case we must combine the information we get
from \eqref{eqn:LES11},  \eqref{eq:LESwedgecirc+1}, and
\eqref{eq:Yminus}), and $0$ for $i=7$. The results of the calculation
are summarized in the last column of Table \ref{table:twistedKO} in
Section \ref {sec:KOtwist} below.   

We could also use the spectral sequence in Theorem
\ref{thm:chargedBredonSS} for $S^{1,1}\times S^{1,1}$ with
$\alpha=(+,+,+,-)$. To determine the $E_2$ term, we need to look at
the groups 
$H^p_G(X;\uwave{KR_\alpha}^{q})$. These are most easily computed
using the exact sequence
\begin{equation}
\cdots \to H^p_{G,c}(X\smallsetminus X^\iota;\uwave{KR_\alpha}^q)
\to H^p_G(X;\uwave{KR_\alpha}^q) \to
H^p_G(X^\iota;\uwave{KR_\alpha}^q)\to \cdots.
\label{eq:KRalphatwists}
\end{equation}
Here $H^p_G(X^\iota;\uwave{KR_\alpha}^q)$ is non-zero only for $p=0$,
where it is $3KO^q\oplus KSp^q$, and $H^p_{G,c}(X\smallsetminus X^\iota;
\uwave{KR_\alpha}^q)$ was computed in the proof of
Proposition \ref{prop:KRminusfixedset}.
In \eqref{eq:KRalphatwists} there is one potentially nonzero
connecting map, $\bZ^4\cong 3KO^q\oplus KSp^q \to \bZ^3$ when $q\equiv0\pmod4$.
This map can be computed by comparison with the corresponding sequences
for the cases of $S^{1,1}_{+,+}$ and $S^{1,1}_{+,-}$, where the
$KR_\alpha$ groups were computed from equation \eqref{eqn:lespm} and the
surrounding discussion. One finds that the connecting map
has kernel $\bZ$ in all cases, is surjective for $q\equiv0\pmod8$, and
has a cokernel of $\bZ_2^2$ when $q\equiv4\pmod8$. Thus the groups
$H^p_G(X^\iota;\uwave{KR_\alpha}^q)$ are as in Figure
\ref{fig:cBredonSSpppm}. This calculation is consistent with our
computation of $KR^*_{(+,+,+,-)}(S^{1,1}\times S^{1,1})$, assuming
that there are $d_2$ differentials that kill off the $\bZ_2$'s in
positions $(2, -2)$ and $(2, -6)$. 

\begin{figure}[h!]
\[
\xymatrix@R-1.5pc{
q \backslash p & \ar[dddddddddd] & 0 & 1 & 2 &\\
\ar[rrrrr] &&&&&\\
0 & &\bZ & 0 & \bZ&\\
-1 & & \left(\bZ_2\right)^3 & 0 & 0&\\
-2 & & \left(\bZ_2\right)^3 & \bZ^2 & \bZ_2&\\
-3 & &0 & 0 & 0&\\
-4 & & \bZ & \left(\bZ_2\right)^2 & \bZ&\\
-5 & & \bZ_2 & 0 & 0&\\
-6 & & \bZ_2 & \bZ^2 & \bZ_2&\\
-7 & &0 & 0 & 0&\\
&&&&&}
\]
\caption{$E_2$ of the spectral sequence for computing
  $KR_{(+,+,+,-)}^*(S^{1,1}\times S^{1,1})$. The sequence repeats
  with vertical period $8$.}  
\label{fig:cBredonSSpppm}
\end{figure}

The case $\alpha=(+,+,-,-)$ can be obtained by the product of the type
$\widetilde{IA}$ theory with itself or the type IA theory. The 
equivariant decomposition 
\[
S^{1,1}_{(+,-)}\times S^{1,1} = (S^{1,1}_{(+,-)}\times \{\pt\})\amalg
(S^{1,1}_{(+,-)}\times \bR^{0,1})
\]
gives the calculation
\begin{equation}
KR^{i}_{(+,+,-,-)}(S^{1,1}\times S^{1,1})\cong KSC^{i+2}\oplus
KSC^{i+1}.
\label{eq:2plus2minus}
\end{equation}

The same case can also be obtained by looking at
\[
S^{1,1}_{(+,-)}\times S^{1,1}_{(+,-)} = (S^{1,1}_{(+,-)}\times \{\pt\})\amalg
(S^{1,1}_{(+,-)}\times \bR^{0,1}_-).
\]
But crossing with $\bR^{0,1}_-$ has the same effect as crossing with
$\bR^{4,1}$ or with $\bR^{3,0}$, and since $KSC^*$ is $4$-periodic, we get
the same result as in \eqref{eq:2plus2minus}.

Now let us consider orientifolds of the $2$-torus where the fixed set
is $2$ disjoint copies of $S^1$. Topologically, this is $S^{1,1}\times
S^{2,0}$. There are $2$ possible supersymmetric sign choices, $(+,+)$
and $(+,-)$. As usual, the non-supersymmetric case $(-,-)$ can be
obtained from $(+,+)$ by an index shift. When both fixed circles have
sign choice $+$, $KR_\alpha$ reduces to ordinary $KR$-theory, 
\begin{align}
KR^{-i}(S^{1,1}\times S^{2,0}) &\cong KR^{-i-1}(S^{1,1})\oplus KR^{-i}(S^{1,1})\nonumber\\
&\cong KO^{-i-1}\oplus KO^{-i}\oplus KO^{-i}\oplus KO^{-i+1}.
\end{align}

The case $\alpha=(+,-)$ is just the product of the type
$\widetilde{IA}$ theory, $S^{1,1}_{(+,-)}$, with a fixed circle,
$S^{2,0}$, so we find 
\begin{align}
KR^{-i}_{(+,-)}(S^{1,1}\times S^{2,0}) &\cong
KR_{(+,-)}^{-i-1}(S^{1,1})\oplus KR_{(+,-)}^{-i}(S^{1,1})\nonumber\\ 
&\cong KSC^{-i}\oplus KSC^{-i+1}.
\end{align}

To conclude this section, we explain how to compute $KR$-theory for a
$2$-torus orientifold where the involution $\iota$ is orientation
reversing and has a fixed set that is topologically $S^1$. Unlike the
cases above, this orientifold does \emph{not} split as a product of
two circle orientifolds, so a somewhat more complicated calculation is
required. 

\begin{theorem}
\label{thm:KRspecies1}
Let $(X, \iota)$ be a Real space where $X=T^2$ and $\iota$ is smooth,
orientation reversing, and has a fixed set that is topologically
$S^1$. The quotient space $M=X/\iota$ is topologically a closed
M{\"o}bius strip.
{\lp}Such a space arises from taking $X$ to be the complex points of a
smooth projective real curve of genus $1$ when the real points have
exactly one connected component, and taking $\iota$ to be the action
of $\operatorname{Gal}(\bC/\bR)$.{\rp} Then
$KR^j(X, \iota) \cong \left(KO^j\right)^2 \oplus KU^{j-1}$.
\end{theorem}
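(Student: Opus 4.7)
The plan is to compute $KR^{-j}(X)$ by the long exact sequence of the pair $(X,Y)$, where $Y=X^\iota\cong S^1$ carries the trivial involution and $Z=X\setminus Y$ carries a free $\bZ_2$-action, and then to resolve the extensions that arise.

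First, I would identify $Z$ up to $\bZ_2$-equivariant homeomorphism. Since $\iota$ is orientation reversing with fixed set a single circle $Y$, the normal bundle of $Y$ in $X$ is trivial and $\iota$ flips the two sides of $Y$, so $Z/\iota$ is an open M\"obius strip. Consequently $Z$ is the orientation double cover of this open M\"obius strip, which as a Real space is the cylinder $S^{0,2}\times\bR^{0,1}$ with deck transformation $(z,t)\mapsto(-z,-t)$. Using the shift $KR^{-j}(X\times\bR^{0,1})\cong KR^{-j+1}(X)$ and the defining identity $KSC^*=KR^*(S^{0,2})$ recalled in the paper, this gives $KR^{-j}(Z)\cong KSC^{-j+1}$. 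Combined with $KR^{-j}(Y)=KO^{-j}(S^1)\cong KO^{-j}\oplus KO^{-j-1}$, the long exact sequence of the pair reads
\begin{equation*}
\cdots\to KSC^{-j+1}\to KR^{-j}(X)\to KO^{-j}\oplus KO^{-j-1}\xrightarrow{\partial} KSC^{-j+2}\to\cdots.
\end{equation*}

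Next I would analyze the boundary map $\partial$. The $\iota$-invariant map $\pi\co X\to Y$ given by $(x,y)\mapsto x+y$ (viewing $X=\bR^2/\bZ^2$ with $\iota(x,y)=(y,x)$) provides natural pullbacks of rank classes, showing $\partial$ vanishes on the summand $KO^{-j}\subset KR^{-j}(Y)$. On the other summand $KO^{-j-1}\cong\widetilde{KO}^{-j}(S^1)$ one identifies $\partial$ by comparison with the classical Wood-type relationships among $KO$, $KU$, and $KSC$; essentially $\partial$ factors through complexification $KO^{-j-1}\to KU^{-j-1}$ followed by the canonical map into $KSC^{-j+2}$. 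Combining this with the rank computation, the LES predicts the summands $(KO^{-j})^2\oplus KU^{-j-1}$ at each index class modulo $8$.

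The main obstacle will be resolving the extension problems. At $j=0$, for example, the LES collapses to $0\to\bZ\to KR^0(X)\to\bZ\oplus\bZ_2\to 0$, and one must prove the extension is the nonsplit one yielding $\bZ^2$ rather than the split $\bZ^2\oplus\bZ_2$; analogous subtleties arise for $j\equiv 2\pmod 4$. I would settle these either by explicit construction of specific Real bundles on $X$ whose restrictions to $Y$ realize the needed elements of $\widetilde{KO}^{-j}(S^1)=\bZ_2$, or else by exploiting the presentation $X\cong (S^{2,0}\times S^{1,1})/\sigma$ for the free diagonal action $\sigma(z,w)=(-z,-w)$: $KR^*$ of the product is computable by a K\"unneth-type splitting (using the paper's identity $KR^{-j}(S^{1,1})\cong KO^{-j}\oplus KO^{-j+1}$), and the transfer and pullback associated to the free double cover $S^{2,0}\times S^{1,1}\to X$ constrain $KR^*(X)$ up to $2$-torsion, which combined with the LES pins down the extensions to give the stated answer $(KO^{-j})^2\oplus KU^{-j-1}$.
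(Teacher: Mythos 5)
Your setup coincides with the first step of the paper's argument: the identification $X\smallsetminus X^\iota\cong S^{0,2}\times\bR^{0,1}$, hence $KR^{-j}(X\smallsetminus X^\iota)\cong KSC^{-j+1}$, together with $KR^{-j}(Y)\cong KO^{-j}\oplus KO^{-j-1}$, gives exactly the (unreduced form of the) sequence \eqref{eq:species1KSC}. But all of the difficulty in this theorem sits in the two items you leave unproved, and neither of your proposed mechanisms settles them. First, your description of $\partial$ ("factors through complexification, by Wood-type relations") is only asserted; in the paper the connecting map is pinned down (it is $x\mapsto\varepsilon(x)\cdot\beta^{-1}\eta'$ with $\varepsilon\co KO^*\to KSC^*$, nonzero precisely on the free generators in degrees $\equiv 0\bmod 8$) only by introducing a \emph{second} exact sequence \eqref{eq:KRbycutting}, obtained by cutting $X$ along the copy of $S^{1,1}$ lying over an interval of the M\"obius strip transverse to its core, reducing both connecting maps to multiplication by elements of $KO^*$ or $KSC^*$ via module naturality, and playing the two sequences off against each other; with a single sequence you have no comparison that rules out, say, $\partial=0$ there. (A small additional slip: your map $(x,y)\mapsto x+y$ restricts to the fixed circle with degree $2$, so it is not a retraction onto $Y$; vanishing of $\partial$ on the $KO^{-j}$ summand should instead come from a fixed basepoint.)

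More seriously, the extension you single out at $j=0$, $0\to\bZ\to KR^0(X)\to\bZ\oplus\bZ_2\to 0$, must be shown to be \emph{nonsplit}, and your two proposals cannot do that. Exhibiting explicit Real bundles on $X$ restricting to the M\"obius class in $\widetilde{KO}^{0}(S^1)$ could only ever produce a lift, i.e.\ prove splitness, which is the opposite of what is needed; and transfer/pullback for the free double cover $S^{2,0}\times S^{1,1}\to X$ (a correct and potentially useful presentation) constrains the answer only after inverting $2$, which is exactly the ambiguity at stake. Note also that no uniform guess works: at $j\equiv 1\pmod 8$ the analogous extension of $\bZ_2$ by $\bZ$ \emph{is} split, while at $j\equiv 0\pmod 8$ it is not. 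The paper resolves these points because the second sequence has connecting map $x\mapsto(x\eta,0)$, so a copy of $KO^j$ splits off and the complement is identified with $KU^{j-1}$ via the cofiber sequence $\Sigma \mathbf{KO}\xrightarrow{\eta}\mathbf{KO}\to\mathbf{KU}$. Unless you replace your "up to $2$-torsion" step by an integral second computation of this kind (an integral Gysin-type sequence for your double cover might serve, but identifying its maps is then the real work), your argument does not determine $KR^0(X)$, $KR^{-1}(X)$, or $\partial$ on the degree-$0\bmod 8$ generators, and so does not reach the stated answer.
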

\begin{proof}
\emph{Step 1.}
Since $X$ has a nonempty fixed set,
$KR^{-j}(T^2,\iota) \cong \tKR^{-j}(T^2,\iota) \oplus KO^{-j}$, and we
only need to compute $\tKR^{-j}(T^2,\iota)$.
We begin by deducing two useful exact sequences. The first comes from
observing that $X\smallsetminus X^\iota\cong S^{0,2}\times \bR^{0,1}$
(as a Real space). Thus $KR^{-j}\left(X\smallsetminus X^\iota\right)
\cong KR^{-j+1}(S^{0,2})\cong KSC^{-j+1}$. Since
$\tKR^{-j}(X^\iota)\cong KR^{-j}(\bR^{1,0})\cong KO^{-j-1}$, we get the long exact
sequence
\begin{equation}
\label{eq:species1KSC}
\cdots \to KO^{-j-2} \xrightarrow{\delta} KSC^{-j+1} \to \tKR^{-j}(T^2,
\iota) \to  KO^{-j-1} \xrightarrow{\delta} KSC^{-j+2} \to \cdots,
\end{equation}
where the connecting map $\delta$ will be determined later. However,
note for now that $\delta$ vanishes after inverting $2$, since
$KO^{-j-1}\left[\frac12\right]$ is nonzero only for $j\equiv 3\pmod{4}$
and $KSC^{-j+2} \cong \bZ_2$ for these values of $j$. Thus the
torsion-free part of $\tKR^{-j}(X, \iota)$ is the same as for
$KSC^{-j+1} \oplus KO^{-j-1}$ and is thus $\bZ$ for $j$ odd, $\bZ$ for
$j\equiv 0\pmod{4}$, and $0$ for $j\equiv 2\pmod{4}$.

To get the other exact sequence, choose an interval $I$ in $M=
X/\iota$ transverse to the central circle and meeting the boundary
in two points. The inverse image of this interval in $X$ is a copy of
$S^{1,1}$, the unit circle in the complex plane with complex
conjugation as the involution. Furthermore the complement of this copy
of $S^{1,1}$ is isomorphic (as a Real space) to $(0,1)\times
S^{1,1}$. Since $S^{1,1}$ with one fixed point removed is isomorphic
(as a Real space) to $\bR^{0,1}$, $\tKR^j(S^{1,1})\cong KR^j(\bR^{0,1})
\cong KO^{j+1}$ via \cite[Theorem 2.3]{MR0206940} and
$KR^j((0,1)\times S^{1,1})\cong KO^{j-1}\oplus KO^j$. So we get an exact
sequence
\begin{equation}
\cdots \to KO^j \xrightarrow{\rho} 
KO^{j-1}\oplus KO^j\to \tKR^j(X) \to KO^{j+1} \xrightarrow{\rho} KO^{j} \oplus
KO^{j+1} \to \cdots.
\label{eq:KRbycutting}
\end{equation}

\emph{Step 2.}
Observe next that the connecting maps $\delta$ and $\rho$
have to be compatible with cup products by the ground ring 
\[
KO^* \cong \bZ[b^{\pm}, \xi,
  \eta]/(2\eta, \eta^3, \xi\eta, \xi^2-4b).
\]
Here the torsion-free generators are $b$ in degree $-8$ and $\xi$ in
degree $-4$, and the torsion generator $\eta$ is in degree $-1$.
To prove this claim, simply replace $T^2$ by $T^2\times \bR^{p,q}$.
Thus the connecting map $\delta:KO^{j-1} \to KSC^{j+2}$ has to be of the form $x\mapsto x\cdot y$, $x\in KO^*$ and $y$ some class in
$KSC^3$, and the connecting map $\rho:KO^{j} \xrightarrow{\rho}
KO^{j-1} \oplus KO^{j}$ has to be of the form $x\mapsto (ax, bx)$,
where $a\in KO^{-1}$ and $b\in KO^0$.

\emph{Step 3.}
The ground ring for $KSC$ theory is
\[
KSC^* \cong \bZ[\beta^{\pm}, \eta']/(2\eta', \eta'^2),
\]
where the periodicity element $\beta$ is in degree $-4$ and the
torsion generator $\eta'$ is in degree $-1$. There is a 
canonical ring homomorphism $\varepsilon\co KO^*\to KSC^*$
(the map on $KR$ induced by $S^{1,1}\to
\hbox{pt}$). Then $\varepsilon(\eta)=\eta'$, $\varepsilon(b)=\beta^2$,
and $\varepsilon(\xi)=2\beta$.  These are standard facts which can be
found in \cite[\S1]{MR1935138}, for example.

\emph{Step 4.}
We claim that $\delta$ is given by $x\mapsto \varepsilon(x)\cdot \beta^{-1}
\eta'$ and that $\rho$ is given by $x\mapsto (x\cdot \eta, 0)$.
We get this by playing off the sequences \eqref{eq:species1KSC} and \eqref{eq:KRbycutting} against each
other. Start with $\delta(x) = \varepsilon(x)\cdot y$, $y\in KSC^3$.
If $y$ were $0$, we'd have a short exact sequence
\[
0\to KSC^{j+1} \to \tKR^j(T^2,
\iota) \to  KO^{j-1} \to 0,
\]
and this would imply for example that $\tKR^{-6}(T^2,\iota) \cong
KSC^{-5} \cong
\bZ_2$, which contradicts what we obtain from the other exact sequence
\eqref{eq:KRbycutting} for $j=-6$. Thus $y = \beta^{-1}\eta'$ (the
generator of $KSC^3$) and the claim follows.

Recall that $\rho$ is of the form $x\mapsto (ax, bx)$ with $a\in
KO^{-1}$ and $b\in KO^0\cong \bZ$. The number $b$ must be $0$;
otherwise the torsion-free part of $\tKR^j(T^2,\iota)$ would
contradict what we got in Step 1 from \eqref{eq:species1KSC}.  And $a\in
KO^{-1}$ can't vanish, 
because if it did, we'd have a short exact sequence
\[
0 \to KO^{-2} \oplus KO^{-1} \to \tKR^{-1}(T^2,
\iota) \to  KO^{0} \to 0,
\]
giving $\tKR^{-1}(T^2,\iota) \cong \bZ_2^2 \oplus \bZ$,
while \eqref{eq:species1KSC} gives that $\tKR^{-1}(T^2,\iota)$ is
either $\bZ$ or $\bZ \oplus \bZ_2$.  So this completes the calculation of 
the boundary maps $\delta$ and $\rho$.

\emph{Step 5.}
To conclude, we use a well-known fact in homotopy 
theory \cite[p.\ 206]{MR0402720}, which is
that if $\mathbf{KU}$ and $\mathbf{KO}$ are the complex and real
topological $K$-theory spectra, then there is a fiber/cofiber
sequence of spectra
\[
\Sigma \mathbf{KO} \xrightarrow{\eta} \mathbf{KO} \xrightarrow{c} 
\mathbf{KU}. 
\]
This corresponds to a famous long exact sequence
\cite[Theorem III.5.18]{MR0488029} or \cite[Definition 1.13(2)]{MR1935138}:
\[
\cdots \to KO^{-n}(X) \xrightarrow{\eta} KO^{-n-1}(X) \xrightarrow{c}
KU^{-n-1}(X)  \xrightarrow{r\beta_U^{-1}} KO^{-n+1}(X) \to \cdots.
\]
Here $c$ is complexification, $r$ is realification, and $\beta_U$ is
the complex Bott element.

Because of our calculation of the boundary map $\rho$, $KO^j$ splits
off as a direct summand in $\tKR^{-j}(T^2,\iota)$, and the complement
can be identified with the cofiber of $\eta$ with a degree shift. So
this completes the proof.
\end{proof}

We conclude by noting that \cite[Theorem 4.8]{Karoubi:2005} says that
if $X$ is a smooth projective variety defined over $\bR$ (which in our
case will be a curve of genus $1$), identified with the Real space of
its complex points with involution given by the action of
$\operatorname{Gal}(\bC/\bR)$, then the natural map 
$K_j(X;\bZ_2) \to KR^{-j}(X;\bZ_2)$ sending algebraic to topological
$K$-theory is an isomorphism for $j$ sufficiently large (in our case
$j\ge 1$ suffices). Here algebraic $K$-theory or $KR$-theory with
$\bZ_2$ coefficients is related to the integral theory by a universal
coefficient sequence
\begin{equation}
0 \to KR^{-j}(X)/2 \to KR^{-j}(X;\bZ_2) \to {}_2KR^{-j+1}(X)\to 0,
\label{eq:UCT}
\end{equation}
where ${}_2KR^{-j+1}(X;\bZ_2)$ denotes the $2$-torsion in
$KR^{-j+1}(X)$, and similarly for $K_j$. The torsion subgroup
of $K_j(X)$ was computed in \cite[Main Theorem 0.1]{MR1936583} and
agrees with our results under this isomorphism.\footnote{There is a
  small typo in the statement of \cite[Main Theorem
  0.1]{MR1936583}. $K_2(X)_{\text{tors}}$ should contain $\nu+1$ copies of
  $\bZ_2$ (here $\nu$ is the species), not $\nu$ copies as written.}

\section{More general twists and why they are needed for physics}
\label{sec:generaltwists}

\subsection{Twisted $KO$-theory}
\label{sec:KOtwist}

While twisted complex $K$-theory is by now well-known in both the
mathematics literature (e.g., \cite{MR0282363,MR1018964,MR2172633,
MR2307274,MR2513335}) and the physics literature (e.g.,
\cite{MR1827946,MR2080959}),
its cousin, twisted real $K$-theory, is defined similarly but is less
familiar. One way to define it is by using the $K$-theory of real
continuous-trace algebras of real type (see \cite[\S3]{MR1018964}). In
the separable case, after stabilization, such an algebra is the
algebra of sections vanishing at infinity of a bundle whose fibers are the
compact operators $\cK_{\bR}$ on an infinite-dimensional separable
real Hilbert space $\cH_{\bR}$. Since $O(\cH_{\bR})$ is contractible
but the automorphism group of $\cK_{\bR}$ is the \emph{projective} orthogonal
group $PO(\cH_{\bR}) = O(\cH_{\bR})/\bZ_2$, which is a $K(\bZ_2, 1)$
space, the relevant algebra bundles are classified by homotopy classes of maps 
from the space $X$ to $BPO(\cH_{\bR})$, which 
is a $K(\bZ_2, 2)$ space. Thus they are classified by a single
characteristic class $\widetilde w_2 \in H^2(X, \bZ_2)$, which one can
identify with the characteristic class for
Witten's type I string theory without vector structure in
\cite{Witten:1998-02}.  In other 
words, for each $\widetilde w_2 \in H^2(X, \bZ_2)$, one gets an
$8$-periodic family of $K$-groups $KO^*(X, \widetilde w_2)$, reducing
to $KO^*(X)$ when $\widetilde w_2 =0$.  This is analogous to twisting
by $H$-flux for complex $K$-theory. Recall that the
automorphism group of $\cK$ (the compact operators on a \emph{complex}
infinite-dimensional separable Hilbert space $\cH$) is the projective
unitary group 
$PU(\cH)=U(\cH)/S^1$. In this case the relevant algebra bundles are
classified by homotopy classes of maps from $X$ to $BPU(\cH)$, which
is a $K(\bZ,3)$ space. Therefore they are classified by the $H$-flux
$H\in H^3(X;\bZ)$. 

Just as in the complex case, the twisted real $K$-theory groups can be
computed using an Atiyah-Hirzebruch spectral sequence (AHSS)
\[
H^p_c(X, KO^q) \Rightarrow KO^{p+q}(X, \widetilde w_2 ),
\]
where $H^*_c$ is cohomology with compact supports and
$\widetilde w_2$ appears in the differentials. We will primarily be
interested in the case $X=T^2$, in which case the ``compact supports''
modifier can be dropped and there is only room for one differential,
\[
d_2\co H^0(T^2, KO^q) = KO^q \to KO^{q-1} \cong H^2(T^2, KO^{q-1}).
\]
This differential is cup product with $\widetilde w_2$, viewed
as an element of $H^2(T^2, KO^{-1})\cong \bZ_2$.  So if $\widetilde
w_2$ is the nontrivial element of $H^2(T^2, \bZ_2)$, the $E_2$ term of
the spectral sequence with the non-zero $d_2$ differentials indicated
is shown in Figure \ref{fig:AHSS}, and the $E_3=E_\infty$ term is
shown in Figure \ref{fig:AHSSe3}.
\begin{figure}[hbpt]
\[
\xymatrix@R-1.5pc{
q \backslash p & \ar[dddddddddd] & 0 & 1 & 2 &\\
\ar[rrrrr] &&&&&\\
0 & &\bZ \ar[rrd] & \bZ^2 & \bZ&\\
-1 & &\bZ_2 \ar[rrd] & \bZ_2^2 & \bZ_2& \\
-2 & &\bZ_2 & \bZ_2^2 & \bZ_2& \\
-3 & &0 & 0 & 0&\\
-4 & &\bZ & \bZ^2 & \bZ&\\
-5 & &0 & 0 & 0&\\
-6 & &0 & 0 & 0&\\
-7 & &0 & 0 & 0&\\
&&&&&}
\]
\caption{$E_2$ of the spectral sequence for computing $KO^*(T^2,
  \widetilde w_2)$. The sequence repeats with vertical period $8$.} 
\label{fig:AHSS}
\end{figure}
\begin{figure}[thbp]
\[
\xymatrix@R-1.7pc{
q \backslash p & \ar[dddddddddd] & 0 & 1 & 2 &\\
\ar[rrrrr] &&&&&\\
0 & &\bZ & \bZ^2 & \bZ&\\
-1 & & 0 & \bZ_2^2 & 0&\\
-2 & &\bZ_2 & \bZ_2^2 & 0&\\
-3 & &0 & 0 & 0&\\
-4 & &\bZ & \bZ^2 & \bZ&\\
-5 & &0 & 0 & 0&\\
-6 & &0 & 0 & 0&\\
-7 & &0 & 0 & 0&\\
&&&&&}
\]
\caption{$E_\infty$ of the spectral sequence for computing $KO^*(T^2,
  \widetilde w_2)$. The sequence repeats with vertical period $8$.} 
\label{fig:AHSSe3}
\end{figure}

The groups $KO^*(T^2, \widetilde w_2 )$ are thus determined up to
extensions by summing along the diagonals (where $p+q$ takes a
constant value). We see that $KO^0(T^2, \widetilde w_2 )$ is an
extension of $\bZ$ by $\bZ_2^2$, necessarily split, $KO^{-2}(T^2,
\widetilde w_2 )$ is an extension of $\bZ_2$ by $\bZ$, and the
remaining groups  
$KO^j(T^2, \widetilde w_2 )$ are $\bZ_2^2$ for $j=-1$, $\bZ^2$ for
$j=-3$, $\bZ$ for $j=-4$, $0$ for $j=-5$, $\bZ$ for $j=-6$, $\bZ^2$
for $j=-7$. The only case where we are left with an extension problem
is $j=-2$.  It turns out that $KO^{-2}(T^2, \widetilde w_2 )\cong
\bZ$, which we can see as follows. A map of degree one $T^2\to S^2$
collapsing the $1$-skeleton $S^1\vee S^1$ to a point induces a map of
spectral sequences which is an isomorphism on the columns with $p=0$
and $p=2$, hence shows that $KO^{-2}(T^2, \widetilde w_2 )\cong
KO^{-2}(S^2, \widetilde w_2 )$ (with a non-trivial twist in $H^2(S^2,
\bZ_2) \cong \bZ_2$), so we only need to compute this latter group and
show that it is torsion-free.
This will be done in Section \ref{sec:KOtwistH1} below.

There are many ways of seeing that this sort of $\widetilde w_2$
twisting of $KO$ is needed for D-brane classification in the
``no vector structure'' theory of \cite{Witten:1998-02}.  But the key
feature is that Chan-Paton bundles are given not by $O(n)$ bundles
but by $PO(n)$ bundles \cite[\S2.1]{Witten:1998-02},
\cite[\S7.2]{Gao:2010ava}, which is precisely how our twisting was
defined. 

The physics literature suggests that there should be a 
T-duality between the ``type I with no vector
structure'' theory on $T^2$ and the type IIA orientifold on an
elliptic curve with antiholomorphic involution of species $1$ (i.e.,
a fixed set which is topologically just a single circle)
\cite{Keurentjes:2000}.  The D-brane charges in this
theory are described by the groups $KR^j(T^2, \iota)$, where $\iota$
is an involution on $T^2$ with fixed set $S^1$. These groups were
computed above in Theorem \ref{thm:KRspecies1}. Table
\ref{table:twistedKO} shows $KO^j(T^2, \widetilde w_2)$ for
$\widetilde w_2\ne 0$, $KR^j(T^2, \iota)$ for the species $1$
antiholomorphic involution $\iota$, and
$KR^j_{(+,+,+,-)}(S^{1,1}\times S^{1,1})$ from Section
\ref{sec:chargedKR}. The second column agrees  
precisely with the first column shifted down by $1$, 
and the third column agrees with the second column shifted down by $1$, 
as is predicted by T-duality. Note that the data of the $B$-field for
the type IIB theory on $S^{1,1}\times S^{1,1}$ with $\alpha=(+,+,+,-)$
is encoded in the non-triviality of the $d_2$ differential for the
spectral sequence in Figure \ref{fig:cBredonSSpppm}. In \cite{DMDR} we
will describe how the $B$-field is described by a sign choice under
$T$-duality. 

\begin{table}[htb]
\begin{center}
\begin{tabular}{||c||c||c||c||}
\hline
$j$ mod $8$ & $KO^j(T^2,\widetilde w_2)$ & $KR^j(T^2, \iota)$
& $KR^j_{(+,+,+,-)}(S^{1,1}\times S^{1,1})$\\
\hline\hline
$0$ &$\bZ\oplus\bZ_2^2$ &$\bZ^2$ & $\bZ$\\
$-1$&$\bZ_2^2$ &$\bZ\oplus\bZ_2^2$ & $\bZ^2$ \\
$-2$&$\bZ$ &$\bZ_2^2$ & $\bZ\oplus\bZ_2^2$\\
$-3$&$\bZ^2$ &$\bZ$ & $\bZ_2^2$\\
$-4$&$\bZ$ &$\bZ^2$& $\bZ$ \\
$-5$&$0$ &$\bZ$ & $\bZ^2$ \\
$-6$&$\bZ$ &$0$ & $\bZ$ \\
$-7$&$\bZ^2$ &$\bZ$  & $0$\\
\hline\hline
\end{tabular}
\caption{$KO^j(T^2,\widetilde w_2)$, $KR^j(T^2, \iota)$, and
  $KR^j_{(+,+,+,-)}(S^{1,1}\times S^{1,1})$}  
\label{table:twistedKO}
\end{center}
\end{table}

\subsection{Twisted $KO$-theory with an $H^1$ twist}
\label{sec:KOtwistH1}

Twisting of $KO^*(X)$ by $H^1(X$, $\bZ_2)\times H^2(X,\bZ_2)$ was
already defined by Donovan and Karoubi in \cite{MR0282363}. 
(The group of twists $HO(X)$ is actually
a non-split abelian extension of $H^2(X,\bZ_2)$ by
$H^1(X,\bZ_2)$.) For $X$ compact and $\cA$ a bundle over $X$ whose
fibers are $\bZ_2$-graded simple $\bR$-algebras, with
$w(\cA)=\alpha\in HO(X)$, $KO^\alpha(X)$
is the Grothendieck group of graded real vector bundles $X$ which are
finitely generated projective modules for $\cA$.
Here $w(\cA)=\bigl(w_1(\cA),w_2(\cA)\bigr)$, where 
vanishing of $w_2(\cA)\in H^2(X,\bZ_2)$
is the condition for $\cA$ to be the endomorphism bundle of a
$\bZ_2$-graded vector bundle, and $w_1(\cA)=w_1(V)$ if $\cA$ is the
Clifford algebra bundle of a real vector bundle $V$ for a negative
definite metric \cite[Lemma 7]{MR0282363}. When $w_1=0$, we get
back the twisted $KO$-groups of Section \ref{sec:KOtwist}.
The basic composition rule in $HO(X)$ is that 
\[
w_1(\cA\hat\otimes \cB) = w_1(\cA) + w_1(\cB),\qquad
w_2(\cA\hat\otimes \cB) = w_2(\cA) + w_2(\cB) + w_1(\cA)\cdot
w_1(\cB).
\]
For general $X$, $HO(X)$ can have elements of order $4$, but this
won't happen if (as for $S^1$ or $T^2$) every element of $H^1(X,\bZ_2)$
has square $0$. Thus (assuming this condition) every element of
$HO(X)$ is its own inverse, and by the Thom Isomorphism Theorem of
\cite[\S6]{MR0282363}, if $V$ is a real vector bundle over $X$, 
\begin{equation}
KO^j(V)\cong KO^{j-\dim V} (X, w_1(V),w_2(V)).
\label{eq:twistedThomKO}
\end{equation}

As an example of \eqref{eq:twistedThomKO}, 
we can compute $KO^j(S^1, w_1)$ for the nontrivial
element $w_1\in H^1(S^1,\bZ_2)\cong \bZ_2$. Indeed, we have
\[
KO^j(S^1, w_1)\cong KO^{j+1}(V) \cong \widetilde{KO}^{j+1}(\bR\bP^2),
\]
where $V$ is the nontrivial real line bundle over $S^1$, that is, the
M\"obius strip. And the $KO$-groups of $\bR\bP^2$ were computed in
\cite[Theorem 1]{MR0219060}. The result is given in Table
\ref{table:twistedS1}. Here the surprise is the existence of
$4$-torsion in $\widetilde{KO}^0(\bR\bP^2) \cong 
KO^{-1}(S^1, w_1)$.

\begin{table}[htb]
\begin{center}
\begin{tabular}{||c||cccccccc||}
\hline\hline
$j$&$0$&$-1$&$-2$&$-3$&$-4$&$-5$&$-6$&$-7$\\
\hline
$KO^j(S^1,w_1)$\rule{0pt}{12pt} 
& $\bZ_2$& $\bZ_4$& $\bZ_2$ & $\bZ_2$&$0$&$0$&$0$& $\bZ_2$  \\
\hline\hline
\end{tabular}
\caption{$KO^j(S^1,w_1)$ for the nontrivial twist} 
\label{table:twistedS1}
\end{center}
\end{table}

The groups in Table \ref{table:twistedS1} can once again be explained
by a twisted Atiyah-Hirzebruch spectral sequence with starting point
$H^p(S^1, KO^q)$, but this time the only differential is $d_1$, which
is multiplication by $2$ in the places indicated by the arrows in
Figure \ref{fig:AHSSd1}.

\begin{figure}[hbpt]
\[
\xymatrix@R-1.5pc{
q \backslash p & \ar[dddddddddd] & 0 & 1 & \\
\ar[rrrr] &&&&\\
0 & &\bZ \ar[r] & \bZ & \\
-1 & &\bZ_2  & \bZ_2 & \\
-2 & &\bZ_2 & \bZ_2 & \\
-3 & &0 & 0 & \\
-4 & &\bZ \ar[r] & \bZ & \\
-5 & &0 & 0 & \\
-6 & &0 & 0 & \\
-7 & &0 & 0 & \\
&&&&}
\]
\caption{$E_1$ of the spectral sequence for computing $KO^*(S^1,
  w_1)$. The sequence repeats with vertical period $8$.  Arrows
  represent multiplication by $2$, so in $E_2=E_\infty$, each $\bZ$ in the
$p=1$ column is replaced by a $\bZ_2$, and each $\bZ$ in the
$p=0$ column dies.} 
\label{fig:AHSSd1}
\end{figure}

The calculation of $KO^*(S^1, w_1)$ also enables us to compute
$KO^*(T^2, w_1)$  for any choice of a twisting $w_1\in H^1(T^2,
\bZ_2)$. The reason is that for any such $w_1\ne 0$, we can choose a
topological splitting $T^2=S^1\times S^1$ with respect to which $w_1$
lives only on the first factor, so that $(T^2, w_1)\cong (S^1, w_1)
\times (S^1, 0)$. It follows that $KO^j(T^2, w_1)$ splits as
$KO^j(S^1, w_1)\oplus KO^{j-1}(S^1, w_1)$.

The calculation of $KO^*(S^1, w_1)$ also enables us to compute
$KO^\alpha(T^2)$ in the sense of Donovan-Karoubi for a twist $\alpha$
with both $w_1(\alpha)$ and $w_2(\alpha)$ nonzero. Indeed, let $V$
again be the nontrivial real line bundle over $S^1$, that is, the
M\"obius strip. Then $V\times V$ (the Cartesian product) is a rank-two
real vector bundle over $S^1\times S^1=T^2$. If $a$ and $b$ are the
elements of $H^1(T^2, \bZ_2)$ dual to the two circles in the
decomposition $T^2=S^1\times S^1$, then $V\times V$ can be identified
with the Whitney sum $L_a\oplus L_b$, since the fiber of $V\times V$
over $(x,y)\in S^1\times S^1$ is $L_a(x,y) \times L_b(x,y) = L_a(x,y)
\oplus L_b(x,y)$. Note that $w_1(L_a\oplus L_b) = a+b$ and
$w_2(L_a\oplus L_b) = ab$, a generator of $H^2(T^2, \bZ_2)$. So
$KO^j(T^2, a+b, ab) = KO^{j+2}(V\times V)$. The same holds for 
$KO^j(T^2, w_1, w_2)$ for any nonzero $w_1$, $w_2$ since there is a
self-homeomorphism of $T^2$ sending $w_1$ to $a+b$.  Finally we can
compute $KO^j(T^2, w_1, w_2)\cong KO^{j+2}(V\times V)$ using
the fact that $V$ has a closed subspace homeomorphic to $\bR$, with
$(V\smallsetminus \bR)\cong \bR^2$, so that we get from the pair
$(V\times V, V\times \bR)$ an exact sequence
\[
\cdots \to KO^j(V) \to KO^j(V) \to
KO^j(T^2, w_1, w_2) \to KO^{j+1}(V) \to \cdots .
\]
In particular, $KO^j(T^2, w_1, w_2)$ is a $2$-primary torsion group
for all $j$. 

Finally, we mention still another application of the Thom isomorphism
\eqref{eq:twistedThomKO}, namely the completion of the calculation of
$KO^*(T^2, \widetilde w_2)$ when the twist is nonzero. Observe that
the nonzero element $\widetilde w_2\in H^2(T^2,\bZ_2)$ is pulled back
from the generator of $H^2(S^2,\bZ_2)$ under a map $T^2\to S^2$ of
degree one, so to compute $KO^*(T^2, \widetilde w_2)$, we can begin by
computing $KO^*(S^2, \widetilde w_2)$. The generator of
$H^2(S^2,\bZ_2)$ is $\widetilde w_2$ for the underlying real $2$-plane bundle of
the Hopf (complex) line bundle over $S^2\cong \bC\bP^1$, for which the
total space is $\bC\bP^2\smallsetminus\{\pt\}$. So by
\eqref{eq:twistedThomKO}, $KO^{-j}(S^2, \widetilde w_2) \cong
\widetilde{KO}^{-j+2}(\bC\bP^2)$, which is computed in \cite[Theorem
2]{MR0219060}. (The degree $0$ part was computed earlier in
\cite[\S3.6]{MR0165532}.)
Rather surprisingly, $\widetilde{KO}^*(\bC\bP^2)$ is
entirely torsion-free, with copies of $\bZ$ in all even degrees and
nothing in odd degrees. Thus $KO^{-2}(S^2, \widetilde w_2) \cong
KO^{-2}(T^2, \widetilde w_2)\cong
\widetilde{KO}^{0}(\bC\bP^2)\cong \bZ $, not $\bZ\oplus \bZ_2$. 

We should mention that even though we didn't need it for studying
D-brane charges in orientifold theories on $2$-tori, in higher dimensional
situations one might be forced to consider all the various kinds of
twists of $KR$ (sign choice, $H^1$, and $H^2$) simultaneously.  The
general framework for such twists is included in the work of Moutuou
\cite{2011arXiv1110.6836M,MR3158706}. 

\section{Conclusion}

$KR$-theory with a sign choice (Definition \ref{def:Krsign}) allows us
to give a mathematical description of $D$-brane charges for all
orientifolds including ones with both $O^+$- and $O^-$-planes. The
additional data of a sign choice is required to distinguish between
topologically equivalent spaces with different $O$-plane content. As
we saw, $KR$-theory with a sign choice gives a purely mathematical
description of the $D$-branes in the type $\widetilde{IA}$
theory. This calculation provides further evidence for $T$-duality
rather than requiring its assumption to determine the brane charges. 

In addition to providing new tests of $T$-duality, $KR$-theory with a
sign choice predicts the $D$-brane content in theories that could not
be computed previously (which in turn can aid in the discovery of
unknown dualities). We are not aware of the $D$-brane content for the
type I theory without vector structure or either of its $T$-dual
theories appearing anywhere in the literature. This extends the
usefulness of $K$-theory as a first check for $D$-brane content to
orientifold theories. As noted previously, the $K$-theoretic
description cannot determine the sources for the $D$-brane charge,
only that there is a stable charge. Determining the stable charges
using $KR$-theory with a sign choice can greatly constrain what
sources need to be tested for stability at different points in the
moduli space using other methods (such as considering the boundary state
description). Since boundary state descriptions can be quite difficult
for orientifolds, any constraints are very useful, and as we show
in \cite{DMDR}, most of the sources can often be determined from the
$KR$-theory using what we know about $O^\pm$-planes.  

As noted in the introduction, one of our original motivations for a
detailed analysis of $T$-duality via orientifold plane charges in
$KR$-theory was the special case of $c = 3$ Gepner models as studied
in \cite{Bates:2006}. The authors of that paper used simple current
techniques in CFT to construct the charges and tensions of Calabi-Yau
orientifold planes. Using twisted $KR$-theory with a sign choice to
classify the brane charges does not depend on the specific structure
of $c = 3$ Gepner models, nor even on a rational conformal field
theoretic description. In \cite{2010arXiv1012.1634E} a twisted
equivariant $K$-theory description of the $D$-brane charge content for
WZW models is provided. Current work in progress attempts to
generalize this work by establishing an isomorphism between a suitable
(real) variant of twisted equivariant $K$-theory, sufficient to
capture orientifold charge content, and our $KR$-theory with sign
choices for Gepner models.  Such an isomorphism would allow the
computation of twisted $KR$-theory with a sign choice for complicated
Calabi-Yau manifolds through a simpler computation at the Gepner
point. 

$KR$-theory with a sign choice provides a universal $K$-theory for
classifying $D$-brane charges. In addition to being able to describe
new orientifold cases it reduces to all other known classifications on
smooth manifolds when using the correct involution. This unifies the
$K$-theoretic classification of $D$-brane charges by not requiring one
to change $K$-theories for different string theories. While its
definition was motivated by a problem in physics, the last point
exemplifies why $KR$-theory with a sign choice is also interesting
mathematically. 

 $KR$-theory with a sign choice provides a framework for studying the
underlying structure of $K$-theory. It was very surprising to see that
$KSC$-theory (the $KR$-theory of $S^{0,2}$) can be described as a
\textit{twisting} of the $KR$-theory of $S^{1,1}$. While we have
explicitly shown that twisted $KR$-theory with a sign choice satisfies
all possible $T$-duality relationships for spaces where the compact
dimensions are a circle or a $2$-torus, in this paper we did not look
at why there are isomorphisms between the twisted $KR$-theories of
$T$-dual theories. The purpose of this paper was simply to set up the
necessary topology to correctly classify brane charges. In \cite{DMDR}
we explore why $T$-duality gives isomorphisms of twisted
$KR$-theory with a sign choice. The extra data that
we needed to include is contained in the geometry of $T$-dual
theories. 
 
 We have already seen how considering the geometry is important. The
 complex structure constrains what involutions are possible on a
 $2$-torus. Since the physical theory depends on the involution, the
 geometry of the torus constrains the allowable string
 theories. Another well known example that played a role in our
 analysis is the $B$-field, which is determined by the K\"ahler
 modulus. We were also compelled to explore more exotic twists in
 order to account for the $T$-duality of the type I theory without
 vector structure. Without the physical motivation we might not have
 considered looking at such additional mathematical structures. We
 have shown how such twistings must behave via an Atiyah-Hirzebruch
 spectral sequence. In \cite{DMDR} we give a more geometric
 reason for why such twistings must be included. 
 
 By exploring the underlying topology and geometry we were able to
 gain physical information and new evidence for hypothesized
 dualities. Additionally, this work shows how we can go in the
 opposite direction and use the additional structure of physics to
 gain insight into the underlying geometry and topology. This gives us
 a greater understanding of the interplay between the three
 structures: topology, geometry, and physics.

\bibliographystyle{hplain}
\bibliography{T2}
\end{document}